\documentclass[english]{cccconf}
\usepackage{color}
\usepackage{cite}
\usepackage{amsmath,amssymb,amsfonts}
\usepackage{diagbox}
\usepackage{textcomp}
\usepackage{tikz}
\graphicspath{{images/}}
\usepackage{graphicx}
\usepackage{epstopdf}
\usepackage{amssymb}
\usepackage{algorithm}
\usepackage{algpseudocode}
\usepackage{amsmath}
\allowdisplaybreaks

\makeatletter

\newcommand{\Rmnum}[1]{\expandafter\@slowromancap\romannumeral #1@}

\def\0{{\bf 0}}

\def\thefootnote{*}

\def\proof{\noindent{\bf Proof: }}
\newtheorem{theorem}{Theorem}[section]
\newtheorem{lem}[theorem]{Lemma}

\newtheorem{definition}{Definition}[section]
\newtheorem{prp}[theorem]{Proposition}

\newtheorem{rem}[theorem]{Remark}
\begin{document}
\title{Aggregation of Evolutionary Game Dynamics on Large-Scale Weighted Networks}
\author{Zhiru Wang, Bin Wu$^\ast$
}
\maketitle

\begin{abstract}
Networked evolutionary games, which integrate network topology and game dynamics, serve as a powerful framework for complex systems. Evolutionary games on large-scale networks, however, have been analytically challenging due to the curse of dimensionality arising from large network size. This paper proposes an aggregation method based on backward equivalence, such that agents within the same equivalence class behave exactly the same over time.
We give a necessary and sufficient condition under which the weighted networked evolutionary games (WNEG) are reduced to an equivalent system with low dimension. The aggregation is shown to reduce the computational burden ranging from strategy consensus, strategy optimization, controllability, to optimal control of the WNEG. Examples are  provided. Our work opens an avenue to solve the curse of dimensionality on networked evolutionary game systems with mathematical rigor.
\end{abstract}

\keywords{Aggregation, backward equivalence, consensus, controllability, evolutionary games, optimization, optimal control, weighted networks.}
\footnotetext{$\dag$: Corresponding author.
Zhiru Wang and Bin Wu are with Key Laboratory of Mathematics and Information Networks, School of Mathematical Sciences, Beijing University of Posts and Telecommunications, 100876, Beijing, P.R. China (e-mail: wzrsdmn2023@163.com; bin.wu@bupt.edu.cn).
}

\section{Introduction}
\label{sec:introduction}
Evolutionary games with spatial structure or spatial games are pioneered by Nowak and May \cite{M1}. It is shown that spatial structure facilitates the emergence and maintenance of cooperative behavior for the Prisoner's Dilemma on two-dimensional square lattice. Networked evolutionary games \cite{G2} have been attracting increasing attention in control community \cite{J3,D4,B5}. Heterogeneity of individual interactions is ubiquitous. The distance between sensor and sink in a sensor network is crucial for data transmission \cite{T6}. If the wireless communication link is weighted by the distance between nodes,
the sensor network is a weighted network. Similarly, the travel time is imperative for optimal travel routes in a transportation network  \cite{G39}. If each road is weighted by its travel time, the transportation network is a weighted network too.
There has been increasing interests in weighted networked evolutionary game (WNEG) with its wide application in engineering systems
\cite{I40,J41}.

A WNEG is a finite-valued system thanks to its finiteness: (1) the number of players is finite; (2)the numbers of strategies are finite. Any finite-valued system can be transformed to a discrete-time linear system \cite{L9,C10,J11,L12} via semi-tensor product (STP) \cite{D8}.
The evolutionary dynamics of
the WNEG is driven by a logical matrix, i.e., the profile transition matrix. The STP approach is a rigorous mathematical framework to study the dynamics and control WNEGs
\cite{J7,R13,Y14,R15}.
However, STP requires dimension expansion via tensor product for matrix multiplication. It leads to a sharp expansion in matrix size as the node size of network increases. This expansion is likely to exceed the computation capacity for the STP-based criterion.
For strategy optimization, consensus, stability and stabilization, the proposed criterion typically requires multiple matrix product operations on the profile transition matrix \cite{P44,Y45,J46}. When the number of players is $n$ and each player has $k$ strategies,
the computational complexity required to validate this type of criterion is $O(k^{3n})$.
It grows exponentially with the number of players.
Consequently, this type of criterion cannot handle WNEGs with more than thirty players in a reasonable amount of time even for two-strategy games \cite{K16}. This is the curse of dimensionality arising from STP.
It is thus timely to figure out a method to simplify the network topology before representing the dynamics of WNEGs as linear systems, if STP is adopted.

This paper proposes an aggregation method based on backward equivalence for WNEGs under the myopic best response adjustment rule, which aims to solve the curse of dimensionality by STP. The contributions of this paper are three-fold:
\begin{enumerate}
\item[$\mathrm{(i)}$] A necessary and sufficient condition is given to show under what edge connection patterns the network of the WNEG can be aggregated in the sense of backward equivalence.
\item[$\mathrm{(ii)}$]  The criteria are provided to check whether the WNEG with \emph{external control} can be reduced to a \emph{controlled} system with low dimension.
\item[$\mathrm{(iii)}$] The effectiveness of the proposed  aggregation for dimension reduction is quantified by the time complexity and space complexity ranging from strategy consensus, strategy optimization, controllability to optimal control of the WNEG.
\end{enumerate}
The rest of this article is organized as follows. Section 2 introduces notions and preliminaries on STP. The problem formulation is in Section 3. Section 4 is on how to make use of backward equivalence to reduce dimension for the WNEG. The effectiveness of the proposed aggregation method is demonstrated in Section 5. Finally, Section 6 gives a conclusion.
\section{Preliminaries}
Some necessary notations are listed below.
\begin{itemize}
\item $\mathbb{R}:$ the set of real numbers. $\mathbb{R}_{p\times q}:$ the set of $p\times q$ real matrices.
\item $M^T$ : the transpose of matrix $M$.
  \item Given a matrix $M\in\mathbb{R}_{p\times q}$, denote the $(i,j)$-th element and the $i$-th column (row) of $M$ by $[M]_{ij}$ and $\mathrm{Col}_i(M)\ (\mathrm{Row}_i(M))$, respectively.
\item $\mathbf{0}_{n}:=[\underbrace{0 ,\cdots,0}_{n}]^{T}$ and $\mathbf{1}_{n}:=[\underbrace{1 ,\cdots,1}_{n}]^{T}$.
\item $\mathcal{D}_{k}:=\{1,2,\cdots,k\}$ and $\mathcal{D}_{k}^{n}:=\underbrace{\mathcal{D}_{k}\times\cdots\times\mathcal{D}_{k}}_{n}$.
  \item $\Delta_n:=\{\delta_n^j|j=1,2,\cdots,n\}$, where $\delta_{n}^{j}=\mathrm{Col}_{j}(I_{n})$ and $I_n$ denotes the $n$-dimensional identity matrix.
  \item $\delta_{n}\{j_{1},j_{2},\cdots,j_{n}\}:=\{\delta_{n}^{j_{1}},\delta_{n}^{j_{2}},\cdots,\delta_{n}^{j_{n}}\}$.
  \item A matrix $M\in\mathbb{R}_{p\times q}$ is
  called a logical matrix, if $M=[\delta_{p}^{j_{1}},\delta_{p}^{j_{2}},\cdots,\delta_{p}^{j_{q}}]$; We can express the matrix $M$ as
  $M=\delta_{p}[{j}_{1},j_{2},\cdots,j_{q}]$,
  and define the set of all
  $p\times q$ logical matrices as ${\mathcal{L}}_{p\times q}$.
\item A $p\times q$ matrix $M$ is called a Boolean matrix, if $[M]_{ij}$ = 0 or 1 for all $i=1,2,\cdots,p,\;j=1,2,\cdots,q$, and denote
the set of all $p\times q$ Boolean matrices as $\mathfrak{B}_{p\times q}$.
\item $M,N\in\mathbb{R}_{p\times q}, M\geq N$ means that $[M]_{ij}\geq[N]_{ij}, \forall 1\leq i\leq p$ and $1\leq j\leq q$.
 \item $[a:b]$: the set of integers $\rho$ satisfying $a\leq\rho\leq b$.
 \item $|\Phi|$: the cardinality of set $\Phi$.
\end{itemize}

Next, we review the fundamentals of STP.
\begin{definition} (\hskip -0.1mm\cite{D8})\label{d2.2.1} The STP of two matrices $M\in {\mathbb{R}}_{p\times q}$ and $N\in {\mathbb{R}}_{m\times n}$ is defined as $$M\ltimes N=(M\otimes I_{\frac{\alpha}{q}})(N\otimes I_{\frac{\alpha}{m}}),$$ where $\alpha=lcm(q,m)$ is the least common multiple of $q$ and $m$, and $\otimes$ is the Kronecker product.\end{definition}

Evidently, when $q=m$, STP of matrices degenerates into the conventional matrix product. In what follows, we omit the symbol ``$\ltimes$" if no confusion arises.


Identify $i\thicksim\delta^{i}_{k},i=1,2,\cdots,k$, then $\mathcal{D}_{k}\thicksim\Delta_{k}$.
The following lemma is obtained.
{\lem (\hskip -0.1mm\cite{D8})\label{2.2.3} Let $g:\mathcal{D}^{n}_{k}\rightarrow \mathbb{R}$ ~(or $g:\mathcal{D}_{k}^n\rightarrow \mathcal{D}_{m}$) be a finite-valued logical function.  Then there exists a unique matrix $G\in\mathbb{R}_{1\times k^n}$ (or $G\in\mathcal{L}_{m\times k^n}$), such that
$$
g(x_1,x_2,\cdots,x_n)=G\ltimes_{i=1}^nx_i,
$$
where $G$ is called the structural matrix of $g$.
}

\section{Problem formulation}
\begin{definition}
A weighted networked evolutionary game (WNEG) is denoted by $((N,A),G,f)$, consists of
\begin{enumerate}
\item[$\mathrm{(i)}$]a weighted network $(N,A)$. $N=\{1,2,\cdots,n\}$ is the set of players.  $A:=(a_{ij})_{n\times n}$ is the adjacency matrix, where $a_{ij}=a_{ji}\geq0$. $a_{ij}>0$ implies that there is a edge between $i$ and $j$ with weight $a_{ij}$.  $a_{ij}=0$ means that there is no edge between $i$ and $j$. Denote $\mathcal{N}_i=\{j|a_{ji}>0\}$ as the set of neighbors of player $i$.
\item[$\mathrm{(ii)}$]a fundamental networked game (FNG) $G$. If $a_{ij}>0$ then players $i$ and $j$ play the FNG with strategy set $S_i=S_j=S_0=\{1,2,\cdots,k\}$. Let $C:=(c_{pq})_{k\times k}$ be the payoff matrix of the FNG, where $c_{pq}$ denotes the payoff of a player when it chooses strategy $p$ and its opponent chooses strategy $q$.
\item[$\mathrm{(iii)}$] the myopic best response adjustment rule $f:=\{f_1,f_2,\cdots,f_n\}$. The evolutionary equation for each player is
\begin{equation}\label{1.1}x_i(t+1)=f_i(x_j(t)|j\in\mathcal{N}_i),\end{equation} where $x_i(t), i\in N$ denotes the strategy of player $i$ at time $t$, $f_i: \mathcal{D}^{|\mathcal{N}_i|}_k\rightarrow\mathcal{D}_k$ is a $k$-valued logical function.
\end{enumerate}
\end{definition}

For the myopic best response adjustment rule, the strategy choice of each player at next time is the best response against the strategies of his neighbors at time $t$, say,
$$BR_i:=argmax_{x_i\in S_i}p_i(x_i,x_j(t)|j\in\mathcal{N}_i),$$
where $p_i(x_i,x_j(t)|j\in\mathcal{N}_i)=\sum\limits_{j\in\mathcal{N}_i}\frac{a_{ji}}{\sum\limits_{l\in N}a_{li}}V_r^T(C)x_ix_j(t)$ and $V_r^T(C)=(c_{11},c_{12},\cdots,c_{1k},\cdots,c_{k1},c_{k2},\cdots,c_{kk})$.
If $|BR_i|>1$, we choose the only one in the order of priority defined below
$x_i(t+1)=\mathrm{min}\{x|x\in BR_i\}$.

Represent each strategy $x_i\in S_i$ as an equivalent vector form $\delta_k^i$, then $\mathcal{D}_k\sim\Delta_k$. We have $x_i(t)\in\Delta_k, i=1,2,\cdots,n$. Furthermore, for any given profile $x(t)=(x_1(t),x_2(t),\cdots,x_n(t))\in S=\{(x_1,x_2,\cdots,x_n)|x_i\in S_i, i\in N\}:=\{s^1,s^2,\cdots,s^{k^n}\}$ can also be represented as a $k^n$-dimensional vector $x(t)=\ltimes_{i=1}^nx_i(t)\in\Delta_{k^n}$.

By Lemma \ref{2.2.3}, the evolutionary equation of the WNEG is
\begin{equation}\label{3.4}
x(t+1)=Fx(t),
\end{equation}
where $F\in\mathcal{L}_{k^n\times k^n}$ is the profile transition matrix. Its dimension grows exponentially with the number of players $n$, although the matrix $F$ is a general mathematical tool to analyze the dynamics of the WNEG. The problem is that the computational complexity becomes very high when dealing with matrix $F$.
Then, we introduce the definition of node aggregation to reduce the dimension of system \eqref{3.4} with dynamics invariant.

\begin{definition}(\hskip -0.1mm\cite{K16}) A node aggregation of the network $(N,A)$ associated with the WNEG is a partition $\mathcal{H}=\{H_1,H_2,\cdots,H_M\}$ of $N$, where $M\in[1:n-1]$, that is $N=H_1\cup H_2\cup\cdots\cup H_M$ and $H_i\cap H_j=\emptyset$ for all $i\neq j, i,j\in[1:M]$. $H_i,i\in[1:M]$ is an equivalence class. $\mathcal{N}_a=[1:M]$ is the super player set and $H_i,i\in[1:M]$ is a super player $i$.
$A_a:=(\bar{a}_{ij})_{M\times M}$ is the adjacency matrix after aggregation, where $\bar{a}_{ij}=\mathrm{min}\{\sum\limits_{p\in H_i}a_{pq}|\forall q\in H_j\}\geq0$. There exists an edge between super player $i$ and super player $j$ if and only if $\bar{a}_{ij}>0$. Then, the network $(N_a,A_a)$ is the aggregated network.
\end{definition}

We refer to $((N_a,A_a),G,f)$ as the aggregated WNEG.
Denote $\mathcal{N}_i^{a}=\{j|\bar{a}_{ji}>0\}, i\in[1:M]$ as the set of neighbors of super player $i$. Let $H_j=\{j_1,j_2,\cdots,j_{n_j}\}, j\in[1:M]$.
In the following, the equivalence of dynamics for evolutionary game on the network $(N,A)$ and on the aggregated network $(N_a,A_a)$ is defined.
\begin{definition}\label{9.2} For two systems \eqref{1.1} and
\begin{equation}\label{7.2}
\hat{x}_j(t+1)=f_j(\hat{x}_i(t)|i\in\mathcal{N}_j^{a}), j\in[1:M].
\end{equation}
The systems \eqref{1.1} and \eqref{7.2} are said to be equivalent if
$$x_i(t)=\hat{x}_j(t), i=j_1,j_2,\cdots,j_{n_j},j\in[1:M]; t\geq0$$
given the same initial condition $x_i(0)=\hat{x}_j(0)$.\end{definition}

If system \eqref{7.2} is equivalent to system \eqref{1.1}, then a $k^M\times k^M$-dimensional matrix can replace the matrix $F$ to determine the dynamics of the WNEG, thereby achieving the goal of dimension reduction. Therefore, in this note, we seek to explore the following question:

\textbf{Question:} What kind of network connection $(N,A)$ ensures that there exists a partition $\mathcal{H}$ in WNEG $((N,A),G,f)$ such that system \eqref{1.1} is equivalent to system \eqref{7.2} for any game $G$ by the myopic best response adjustment rule $f$.
\section{Dimension reduction of the WNEG via backward equivalence}

We first give a necessary and sufficient condition for checking whether $\mathcal{H}$ is backward equivalent. Consequently, the network structure that can be aggregated is identified and the evolutionary dynamics of the WNEG are equivalent to a low-dimension system. Next, the criteria are established to check whether $\mathcal{H}$ remains backward equivalent when adding control players as pseudo-neighbor players. This provides the reduction schemes for controlling the WNEG.
\subsection{ Backward equivalence}
Backward equivalence is given below.
\begin{definition}\label{9.4} $\mathcal{H}$ is a backward equivalence, if $x_i(t)=x_j(t), t\geq0$ holds for arbitrary $H\in\mathcal{H}$ and $i,j\in H$.
\end{definition}

From Definition \ref{9.2}, the condition that system \eqref{1.1} is equivalent to system \eqref{7.2} can be interpreted as backward equivalence. Therefore, the question we want to study is transformed to under what conditions there exists a backward equivalent partition $\mathcal{H}$ in the WNEG.
Denote $S_i=\{1,2,\cdots,k\}:=\{x_i^1,x_i^2,\cdots,x_i^k\}, i\in N$ and $x_{-i}(t)=(x_1(t),\cdots,x_{i-1}(t),x_{i+1}(t),\cdots,x_n(t))$.
Denote $S'=\{(x_1,x_2,\cdots,x_n)|x_i=x_j, \forall H\in\mathcal{H}, i,j\in H\}:=\{s^{\rho_1},s^{\rho_2},\cdots,s^{\rho_{k^M}}\}$. Denote $S_{\eta}^q$ as the set of equivalence classes for choosing strategy $q$ in the $\eta-$th profile in $S'$. Then, we have the following result.
\begin{theorem}\label{3.3} $\mathcal{H}$ is a backward equivalence
if and only if for arbitrary $H,H'\in \mathcal{H}$ and $i,j\in H$, it holds that
\begin{enumerate}
\item[$\mathrm{i)}$] $x_i(0)=x_j(0)$;
\item[$\mathrm{ii)}$] $\frac{\sum\limits_{p\in H'}a_{pi}}{\sum\limits_{l\in N}a_{li}}=\frac{\sum\limits_{p\in H'}a_{pj}}{\sum\limits_{l\in N}a_{lj}}$.
\end{enumerate}
\end{theorem}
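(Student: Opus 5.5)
The plan is to read the statement in the sense of the Question above: $\mathcal{H}$ is a backward equivalence for every fundamental game $G$, under the myopic best response rule $f$, and for every initial profile in $S'$. Condition i) is then simply membership of $x(0)$ in $S'$. I would prove sufficiency by induction on $t$ and necessity by constructing a two-strategy game that separates two players whose weight fractions differ.

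\emph{Sufficiency.} Assume i) and ii). I would show by induction on $t$ that $x(t)\in S'$. The base case $t=0$ is exactly i). For the inductive step, suppose every class $H'\in\mathcal{H}$ plays a common strategy $y_{H'}(t)$ at time $t$. Group the neighbours of player $i$ by class (the terms with $a_{pi}=0$ contribute nothing). The payoff of $i$ for a candidate strategy $x_i$ then becomes
\begin{equation*}
p_i(x_i,x_{-i}(t))=\sum_{H'\in\mathcal{H}}w_i(H')\,V_r^T(C)\,x_i\,y_{H'}(t),\qquad w_i(H'):=\frac{\sum_{p\in H'}a_{pi}}{\sum_{l\in N}a_{li}}.
\end{equation*}
By ii), $w_i(H')=w_j(H')$ for all $i,j\in H$ and all $H'$. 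Hence $p_i(\cdot)$ and $p_j(\cdot)$ coincide as functions on $S_0$. So $BR_i=BR_j$, and the deterministic tie-break $\min BR$ yields $x_i(t+1)=x_j(t+1)$, which gives $x(t+1)\in S'$.

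\emph{Necessity.} Condition i) is Definition \ref{9.4} at $t=0$. For ii), argue by contradiction. Suppose $i,j\in H$ and $H'$ satisfy $w_i(H')=:\alpha<\beta:=w_j(H')$, after relabelling if needed. Take $k=2$ and the initial profile in $S'$ in which every member of $H'$ plays strategy $1$ and everyone else plays strategy $2$; this profile is consistent with i) even when $H'=H$. Then the payoff of $i$ is
\begin{equation*}
p_i(x_i)=\alpha\,c_{x_i1}+(1-\alpha)\,c_{x_i2},
\end{equation*}
and similarly for $j$ with $\beta$. Now choose $C$ with $c_{11}=1$, $c_{12}=0$, $c_{21}=0$ and $c_{22}=\tau>0$. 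Strategy $1$ earns $\alpha$ and strategy $2$ earns $(1-\alpha)\tau$.

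Next pick $\theta\in(\alpha,\beta)$ and set $\tau=\theta/(1-\theta)$; this is possible since $\theta<\beta\le1$. The function $w\mapsto w-(1-w)\tau$ is strictly increasing and vanishes at $w=\theta$. Therefore player $j$ strictly prefers strategy $1$ and player $i$ strictly prefers strategy $2$. This gives $x_i(1)\neq x_j(1)$, contradicting backward equivalence.

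\emph{Main obstacle.} The delicate point is the quantifier in necessity. For a single fixed game and initial profile, ii) is not necessary: for example, a coordination game from a consensus profile never separates anyone. So the proof must state explicitly that backward equivalence is required for all games $G$ and all admissible initial profiles, as in the Question, and then carry out the threshold construction above. A minor point is that sufficiency needs $\sum_{l}a_{li}>0$, i.e.\ no isolated players, so that the fractions in ii) are defined.
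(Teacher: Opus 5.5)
Your proof is correct. Sufficiency matches the paper's argument: group neighbours by class, use ii) to see that $p_i$ and $p_j$ agree as functions of the candidate strategy, then iterate with the $\min$ tie-break. Necessity takes a different and more economical route.

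\textbf{The paper's necessity argument.} It keeps $k$ general and works profile by profile. For every $s^{\rho_\theta}\in S'$ and every pair of strategies $(\xi,\beta)$, it writes out inequality systems in which $i$ best-responds with one strategy while $j$ prefers the other. It then argues that some payoff matrix $C$ produces such a split unless the strategy masses $\sum_{H'\in S_\theta^\alpha}\sum_{p\in H'}a_{p\rho}$, normalised by the mass on a reference strategy $h_\theta$, agree for $\rho=i$ and $\rho=j$. This gives a proportionality factor $q_\theta$ for each profile. Only then does it combine the $k^M$ relations to obtain class-wise proportionality and hence ii).

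\textbf{Your necessity argument.} You go straight to the one profile that isolates $\sum_{p\in H'}a_{p\rho}$: class $H'$ plays strategy $1$ and everyone else plays strategy $2$. You then exhibit a single coordination game whose threshold $\theta$ lies strictly between $w_i(H')$ and $w_j(H')$. The two players strictly disagree, so ties never arise. This bypasses the case analysis and the normalisation by the mass on $h_\theta$. What the paper's longer route buys is a profile-by-profile description of when some $C$ separates $i$ and $j$, which is more than the theorem needs.

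\textbf{The quantifier remark.} You note that necessity only makes sense when quantified over all payoff matrices and all initial profiles in $S'$. This is correct, and it states openly what the paper uses implicitly (``we can find a suitable payoff matrix $C$'').

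\textbf{One small adjustment.} The strategy-set size $k$ is part of the WNEG, so for $k>2$ you should pad your matrix with strictly dominated strategies, e.g.\ $c_{pq}=-1$ for $p\ge 3$. In your construction all payoffs of strategies $1$ and $2$ are nonnegative, so both best responses are unchanged. Necessity does require $k\ge 2$: for $k=1$ every player always plays strategy $1$, so ii) is not needed.
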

The proof can be found in the Appendix A, and here we only provide a sketch of the proof. To prove the sufficiency, we take iterative recursion from initial values.
Based on condition i), we give any profile at time $t$ where players within the same equivalence class take identical strategies. For arbitrary $H,H'\in\mathcal{H}$ and $i,j\in H$, if condition ii) and $x_i=x_j$ hold, then  $p_i(x_i,x_{-i}(t))=p_j(x_j,x_{-j}(t))$. It follows that $x_i(t+1)=x_j(t+1)$ according to the myopic best response adjustment rule.
By continuously iterating over time $t$, it is obtained that the strategies of any two players in the same equivalence class are identical all the time. This implies that $\mathcal{H}$ is a backward equivalence.

To prove the necessity, we first examine all the cases where any two players in the same equivalence class take distinct strategies. We find that if condition ii) in Theorem \ref{3.3} does not hold, then $\mathcal{H}$ is not  backward equivalent. It implies that if $\mathcal{H}$ is backward equivalent, condition ii) in Theorem \ref{3.3} must hold. The following presents key steps of necessity of condition ii).

Given any profile $s^{\rho_\theta}=x(t)=(x_1(t),x_2(t),\cdots,x_n(t))\in S', \theta=1,2,\cdots,k^M$.
For arbitrary $H\in\mathcal{H}$ and $i,j\in H$, we assume that the best strategy for player $i$ at time $t+1$ is $x_i^\xi$, and the payoff for player $j$ choosing strategy $x_j^\beta$ is greater than that of choosing strategy $x_j^\xi$. Here we only discuss the case where $\xi<\beta$. Then one has
\begin{equation}\label{4.9}
\left\{
    \begin{array}{ll}
      p_i(x_i^\xi,x_{-i}(t))-p_i(x_i^\beta,x_{-i}(t))\geq0,  \\
      p_j(x_j^\xi,x_{-j}(t))-p_j(x_j^\beta,x_{-j}(t))<0.
    \end{array}
  \right.
\end{equation}
Next, we consider the case where the best response strategy of player $i$ is $x_i^\beta$, and the payoff for player $j$ when choosing strategy $x_j^\xi$ is greater than that when choosing strategy $x_j^\beta$. Thus
\begin{equation}\label{5.0}
\left\{
    \begin{array}{ll}
      p_i(x_i^\xi,x_{-i}(t))-p_i(x_i^\beta,x_{-i}(t))<0,  \\
      p_j(x_j^\xi,x_{-j}(t))-p_j(x_j^\beta,x_{-j}(t))\geq0.
    \end{array}
  \right.
\end{equation}
Solve inequalities \eqref{4.9} and \eqref{5.0}, and there must exist a strategy $h_\theta\in[1:k]$ such that
$$\left\{
    \begin{array}{ll}
     c_{\xi h_\theta}-c_{\beta h_\theta}\geq\sum\limits_{\alpha\in[1:k]\setminus h_\theta}\frac{\sum\limits_{H'\in S_\theta^\alpha}\sum\limits_{p\in H'}a_{pi}}{\sum\limits_{H'\in S_\theta^{h_\theta}}\sum\limits_{p\in H'}a_{pi}}(c_{\xi\alpha}-c_{\beta\alpha})\\
c_{\xi h_\theta}-c_{\beta h_\theta}<\sum\limits_{\alpha\in[1:k]\setminus h_\theta}\frac{\sum\limits_{H'\in S_\theta^\alpha}\sum\limits_{p\in H'}a_{pj}}{\sum\limits_{H'\in S_\theta^{ h_\theta}}\sum\limits_{p\in H'}a_{pj}}(c_{\xi\alpha}-c_{\beta\alpha})
    \end{array}
  \right.
$$
and
$$\left\{
    \begin{array}{ll}
     c_{\xi h_\theta}-c_{\beta h_\theta}<\sum\limits_{\alpha\in[1:k]\setminus h_\theta}\frac{\sum\limits_{H'\in S_\theta^\alpha}\sum\limits_{p\in H'}a_{pi}}{\sum\limits_{H'\in S_\theta^{h_\theta}}\sum\limits_{p\in H'}a_{pi}}(c_{\xi\alpha}-c_{\beta\alpha})\\
c_{\xi h_\theta}-c_{\beta h_\theta}\geq\sum\limits_{\alpha\in[1:k]\setminus h_\theta}\frac{\sum\limits_{H'\in S_\theta^\alpha}\sum\limits_{p\in H'}a_{pj}}{\sum\limits_{H'\in S_\theta^{h_\theta}}\sum\limits_{p\in H'}a_{pj}}(c_{\xi\alpha}-c_{\beta\alpha})
    \end{array}
  \right.
$$
If $\sum\limits_{\alpha\in[1:k]\setminus h_\theta}\frac{\sum\limits_{H'\in S_\theta^\alpha}\sum\limits_{p\in H'}a_{pj}}{\sum\limits_{H'\in S_\theta^{h_\theta}}\sum\limits_{p\in H'}a_{pj}}(c_{\xi\alpha}-c_{\beta\alpha})\neq\sum\limits_{\alpha\in[1:k]\setminus h_\theta}\frac{\sum\limits_{H'\in S_\theta^\alpha}\sum\limits_{p\in H'}a_{pi}}{\sum\limits_{H'\in S_\theta^{h_\theta}}\sum\limits_{p\in H'}a_{pi}}(c_{\xi\alpha}-c_{\beta\alpha})$, then player $i$ and player $j$ must choose different strategies. Conversely, we deduce that $\sum\limits_{\alpha\in[1:k]\setminus h_\theta}\frac{\sum\limits_{H'\in S_\theta^\alpha}\sum\limits_{p\in H'}a_{pj}}{\sum\limits_{H'\in S_\theta^{h_\theta}}\sum\limits_{p\in H'}a_{pj}}(c_{\xi\alpha}-c_{\beta\alpha})=\sum\limits_{\alpha\in[1:k]\setminus h_\theta}\frac{\sum\limits_{H'\in S_\theta^\alpha}\sum\limits_{p\in H'}a_{pi}}{\sum\limits_{H'\in S_\theta^{h_\theta}}\sum\limits_{p\in H'}a_{pi}}(c_{\xi\alpha}-c_{\beta\alpha})$ if $\mathcal{H}$ is a backward equivalence. Considering all possible strategy combinations $x_i^\xi, \xi\in[1:k]$ and $x_i^\beta, \beta\in[1:k]\setminus\{\xi\}$, we arrive at that condition ii) in Theorem \eqref{3.3} holds.

Theorem \ref{3.3} is crucial for simplifying the system to another system with low dimension. In fact, if conditions i) and ii) in Theorem \ref{3.3} hold, then the strategies of all players in any equivalence class in $\mathcal{H}$ are the same from over time. Denote the strategy of super player $j\in[1:M]$ at time $t$ by $y_j(t)$.
For any given profile $x(0)=(x_1(0),x_2(0),\cdots,x_n(0))\in S'$, there exist a profile $y(t)=(y_1(0),y_2(0),\cdots,y_M(0))$ such that $x_i(0)=y_j(0), \forall i\in H_j$.
The weighted average payoff of super players $j, j\in[1:M]$ is
$$p_j(y_j,y_{-j}(0))=\sum\limits_{p\in\mathcal{N}_j^a}\frac{\bar{a}_{pj}}{\sum\limits_{l\in[1:M]}\bar{a}_{lj}}
V_r^T(C)y_jy_p(0),$$
where $y_{-j}(t)=(y_1(t),\cdots,y_{j-1}(t),y_{j+1}(t),\cdots,y_M(t))$.
We obtain $p_j(y_j,y_{-j}(0))=p_i(x_i,x_{-i}(0))$ if $y_j=x_i$. Thus, one has $y_j(1)=x_i(1), \forall i\in H_j$.
After continuous iteration, we get $y_j(t)=x_i(t), \forall i\in H_j, t\geq0$. By Definition \ref{9.2}, system \eqref{3.4} is equivalent to
\begin{equation}\label{10.4}y(t+1)=Ly(t),\end{equation}
where $y(t)=\ltimes_{j=1}^My_j(t)$ and $L\in\mathcal{L}_{k^M\times k^M}$.

On the network $(N,A)$, a partition $\mathcal{H}$ that satisfies the conditions of backward equivalence has already been identified. As a result, the evolutionary dynamics of WNEG $((N,A),G,f)$ are driven by the evolutionary equations of super players on the aggregated network $(N_a,A_a)$.
\subsection{Control via aggregating network}
When we aim to control system \eqref{3.4} by introducing external control players, system \eqref{10.4} with low dimension is undoubtedly more advantageous thanks to its low computational complexity. Noteworthily, aggregation works here. Let us sketch the idea for control via the aggregated network (see Fig.\ref{4} for illustration). We first add control players to control super players on the aggregated network $(N_a,A_a)$. Furthermore, a reduced controlled system is established. Next, we explore a control protocol on network $(N_a,A_a)$ to meet the control target. Finally, we need to figure out the condition under which the
controlled system on the network $(N,A)$ is equivalent to the controlled system on the aggregated network $(N_a,A_a)$. Consequently, once a scheme to solve some control problem is found in the system with low dimension, it also applies to the WNEG.

For the first step,
let $X_a=\{v_1,v_2,\cdots,v_m\}\subseteq N_a$ be the set of controlled super players for the system on the aggregated network. The $m$ external control players $U_a=\{\hat{u}_1,\hat{u}_2,\cdots,\hat{u}_m\}$ act as pseudo-neighbors for the controlled super players $v_1,v_2,\cdots,v_m$. That is, there is an edge between player $v_i, i=1,2,\cdots,m$ and external control $\hat{u}_i$ with positive weight $\bar{a}_i$.
We assume that $S_{\hat{u}_1}=S_{\hat{u}_2}=\cdots=S_{\hat{u}_m}=\{1,2,\cdots,k\}$ hold. Here,
for the controlled system on the aggregated network, the set of players is $\tilde{N}_a=N_a\cup U_a$. The adjacency matrix is $\tilde{A}_a=(\hat{a}_{ij})_{(M+m)\times(M+m)}$, where
$$\hat{a}_{ij}=\left\{
  \begin{array}{ll}
   \bar{a}_{ij}, & i,j\in N_a; \\
    \bar{a}_o, &  i=v_o, j=\hat{u}_o\;\mathrm{or}\; i= \hat{u}_o, j=v_o, o\in[1:m]; \\
    0, &  i,j\in U_a. \\
  \end{array}
\right.$$
The FNG remains $G$.
Players in $N_a$ update their strategies based on the myopic best response adjustment rule $f$. The strategies for players in $U_a$ are what we need to design, and this rule is denoted as $\hat{f}$. Then, the strategy updating rule is $\tilde{f}=(f,\hat{f})$.
The aggregated WNEG $((N_a,A_a),G,f)$ becomes aggregated controlled WNEG $((\tilde{N}_a,\tilde{A}_a),G,\tilde{f})$ as long as the control players are introduced. The control problem here is to figure out $\hat{f}, U_a$ and $\tilde{A}_a$ such that the control goal is achieved.
The strategy evolution equation for each super player in the aggregated controlled WNEG is
\begin{equation}\label{3.6}
\left\{
    \begin{array}{ll}
      y_i(t+1)=f_i(y_p(t)|p\in\mathcal{N}_i^a),\;i\notin X_a \\
      y_{v_i}(t+1)=f_{v_i}(\hat{u}_i(t),y_p(t)|p\in\mathcal{N}_{v_i}^a),\;v_i\in X_a.
    \end{array}
  \right.
\end{equation}
The algebraic form of \eqref{3.6} is
\begin{equation}\label{3.5}
y(t+1)=\bar{L}\hat{u}(t)y(t),
\end{equation}
where $\hat{u}(t)=\ltimes_{i=1}^{m}\hat{u}_i(t)\in\Delta_{k^m}, y(t)=\ltimes_{j=1}^My_j(t)\in\Delta_{k^M}$ and $\bar{L}\in\mathcal{L}_{k^M\times k^{M+m}}$.

For the second step, we have system \eqref{3.5}, which has a much lower dimension than the original system (see below for details). To solve the given control problems, we resort to previous control protocols.

When the control goal is achieved based on system \eqref{3.5}, let us focus on the third step, which is crucial for our analysis.
Concerning how to transform the control protocol on the aggregated network to that of the original system, we give a sketch.
Denote $X$ and $U$ as the set of controlled players and the set of control players in the network $(N,A)$.
If $\bigcup\limits_{i=1}^mH_{v_i}=\{i_1,i_2,\cdots,i_{m'}\}$, then
we add control players $u_1,u_2,\cdots,u_{m'}$ to control players $i_1, i_2, \cdots i_{m'}$ in the network $(N,A)$. That is, $X=\{i_1, i_2, \cdots i_{m'}\}$ and $U=\{u_1,u_2,\cdots,u_{m'}\}$. We also assume that $S_{u_1}=S_{u_2}=\cdots=S_{u_{m'}}=\{1,2,\cdots,k\}$ hold.
Let the weight of edge between control player $u_q$ and controlled player $i_q$ be $a_q=\frac{\sum\limits_{l\in N}a_{li_q}}{\sum\limits_ {l\in[1:M]}\bar{a}_{lv_\kappa}}\bar{a}_{\kappa}, \forall i_q\in H_{v_\kappa}, \kappa\in[1:m]$. Similarly, we have the controlled WNEG $((\tilde{N},\tilde{A}),G,\tilde{f})$, where $\tilde{N}=N\cup U, \tilde{A}=(\tilde{a}_{ij})_{(n+m')\times(n+m')}$ and $$\tilde{a}_{ij}=\left\{
  \begin{array}{ll}
   a_{ij}, & i,j\in N; \\
    a_o, &  i=i_o, j=u_o\;\mathrm{or}\; i= u_o, j=i_o, o\in[1:m']; \\
    0, &  i,j\in U. \\
  \end{array}
\right.$$
The strategy evolution equation of each player in controlled WNEG is
\begin{equation}\label{10.2}
\left\{
    \begin{array}{ll}
      x_i(t+1)=f_i(x_p(t)|p\in\mathcal{N}_i),\;i\notin X  \\
      x_{i_q}(t+1)=f_{i_q}(u_j(t),x_{i_q}(t)|p\in\mathcal{N}_{i_q}),\;i_q\in X.
    \end{array}
  \right.
\end{equation}
The algebraic form of \eqref{10.2} is
\begin{equation}\label{10.3}
x(t+1)=\bar{F}u(t)x(t),
\end{equation}
where $u(t)=\ltimes_{i=1}^{m'}u_i(t)\in\Delta_{k^{m'}}, x(t)=\ltimes_{i=1}^nx_i(t)\in\Delta_{k^n}$ and $\bar{F}\in\mathcal{L}_{k^n\times k^{n+m'}}$.

In the following, the equivalence of controlled systems on the aggregated network $(N_a,A_a)$ and the network $(N,A)$ is defined.
\begin{definition} \label{9.3} For systems \eqref{10.2}
 and
\begin{equation}\label{7.4}
\left\{
    \begin{array}{ll}
      \hat{x}_j(t+1)=f_j(\hat{x}_p(t)|p\in\mathcal{N}_i^a),\;j\notin X^a  \\
      \hat{x}_{v_q}(t+1)=f_{v_q}(\hat{u}_q(t),\hat{x}_p(t)|p\in\mathcal{N}_{v_q}^a),\;v_q\in X^a.
    \end{array}
  \right.
\end{equation}
System \eqref{10.2} and \eqref{7.4} are equivalent if
$$x_i(t)=\hat{x}_j(t), i=j_1,j_2,\cdots,j_{n_j},j\in[1:M]; t\geq0$$
given the same initial condition $x_i(0)=\hat{x}_j(0)$.\end{definition}

We obtain $x_i(t)=y_j(t), \forall i\in H_j, j\in[1:M]\setminus X^a, t\geq0$ since the control players only affect the controlled players. We only need to focus on the evolutionary dynamics of controlled players.
We consider the following two cases classified by whether the number of players in the equivalence class to which the controlled player belongs is one.

\textbf{Case 1:} $m'=m$.
{\prp If $u_\alpha(t)=\hat{u}_q(t), \forall i_\alpha\in H_{v_q}, \alpha\in[1:m'], q\in[1:m], t\geq0$ holds, the controlled system \eqref{10.3} is equivalent to controlled system \eqref{3.5}.}

\proof Each controlled player forms an equivalence class, controlling players $i_1,i_2,\cdots,i_{m'}$ in network $(N,A)$ is equivalent to controlling super players $j_1,j_2,\cdots,j_m$ in aggregated network $(N_a,A_a)$.
Thus, we get $x_i(t)=y_j(t), \forall i\in H_j, j\in X^a, t\geq0$. By Definition \ref{9.3}, controlled system \eqref{3.5} is equivalent to controlled system \eqref{10.3}.$\hfill\square$

\textbf{Case 2:} $m'>m$.

{\prp\label{6.1} If for arbitrary $i_\alpha,i_\beta\in H_{v_q}, q\in[1:m]$ $$\frac{a_\alpha}{\sum\limits_{l\in N}a_{li_\alpha}+a_\alpha}=\frac{a_\beta}{\sum\limits_{l\in N}a_{li_\beta}+a_\beta}$$ and $$\hat{u}_q(t)=u_\alpha(t)=u_\beta(t), t\geq0$$ hold, then the controlled system \eqref{10.3} is equivalent to controlled system \eqref{3.5}.}

{\proof The proof is similar to that of the sufficiency in Theorem \ref{3.3}.
Given any profile $x(0)=(x_1(0),x_2(0),\cdots,x_n(0))\in S'$,
for arbitrary $i_\alpha, i_\beta\in H_{v_q}, q\in[1:m]$, the weighted average payoff of player $i_\varphi, \varphi=\alpha,\beta$ is
\begin{align*}p_{i_\varphi}(x_{i_\varphi},x_{-i_\varphi}(0))=&\sum\limits_{H'\in\mathcal{H}}\frac{\sum\limits_{p\in H'}a_{pi_\psi}}{\sum\limits_{l\in N}a_{li_\varphi}+a_\varphi}V_r^T(C)x_{i_\varphi}x_p(0)\\&+\frac{a_\varphi}{\sum\limits_{l\in N}a_{li_\varphi}+a_\varphi}x_{i_\varphi}u_{\varphi}(0).\end{align*}
Then, we have $p_{i_\alpha}(x_{i_\alpha},x_{-i_\alpha}(0))
=p_{i_\beta}(x_{i_\beta},x_{-i_\beta}(0))$ if $x_{i_\alpha}=x_{i_\beta}$. This implies that $x_{i_\alpha}(1)=x_{i_\beta}(1)$. That is, for arbitrary $H\in\mathcal{H}$ and $a,b\in H$, there is $x_a(1)=x_b(1)$.
Keep repeating the procedure,
when $t\geq1$, for any profile $x(t)=(x_1(t),x_2(t),\cdots,x_n(t))\in S'$, there is $x_{i_\alpha}(t+1)=x_{i_\beta}(t+1)$. Then $x_a(t)=x_b(t), t\geq0$ holds for arbitrary $H\in\mathcal{H}$ and $a,b\in H$ in the controlled WNEG.

Let $x_i(0)=y_j(0), \forall i\in H_j, j\in[1:M]$ and $\hat{u}_q(t)=u_{\alpha}(t)=u_{\beta}(t), \forall i_\alpha, i_\beta\in H_{v_q}, q\in[1:m],t\geq0$, then the weighted average payoff of controlled super player $v_q, q\in[1:m]$ under the profile $(y_1(0),y_2(0),\cdots,y_M(0))$ is
\begin{align*}p_{v_q}(y_{v_q},y_{-v_q}(0))=&\sum\limits_{p\in\mathcal{N}_{v_q}^a}\frac{\bar{a}_{pv_q}}{\sum\limits_{l\in[1:M]}\bar{a}_{lv_q}+\bar{a}_q}
V_r^T(C)y_{v_q}y_p(0)\\
&+\frac{\bar{a}_q}{\sum\limits_{l\in[1:M]}\bar{a}_{lv_q}+\bar{a}_{q}}V_r^T(C)y_{v_q}\hat{u}_q(0) .\end{align*}
We have that $p_{v_q}(y_{v_q},y_{-v_q}(0))=p_{i_{\alpha}}(x_{i_\alpha},x_{-i_\alpha}(0))$ when $y_{v_q}=x_{i_\alpha}$. Furthermore, we get $y_{v_q}(1)=x_{i_\alpha}(1)$. We obtain $y_{v_q}(t)=x_{i_\alpha}(t), t\geq0$ by recursion.
By Definition \ref{9.3}, the controlled system \eqref{10.3} is equivalent to controlled system \eqref{3.5}.
$\hfill\square$

Noteworthily, this idea is general and works for any control problems. Here, our contribution is figuring out how to map the control protocol on aggregated system to original system. Therein, a sophisticated connection mapping between the two systems is given. In the following section, we are going to show the power of this idea by four control problems.
\section{Efficacy of aggregation in controlled WNEG}
In this section, we investigate strategy consensus, optimization, controllability, and optimal control of the WNEG via
aggregation. The complexity analysis are shown that our
aggregation greatly dilutes computational burdens.
\subsection{Strategy consensus of the WNEG via aggregation}
We consider strategy consensus defined below.
\begin{definition} A WNEG with algebraic form \eqref{10.3} achieves strategy consensus if there exist a control sequence $u:\mathbb{N}\rightarrow\Delta_{k^m}$ and an integer $t_0\geq0$ such that $$x_1(t)=x_2(t)=\cdots=x_n(t), \forall t\geq t_0$$ holds for any initial condition $x(0)\in\Delta_{k^n}$.
\end{definition}

A profile feedback control is given by
$$u_i(t)=g_i(x_1(t),x_2(t),\cdots,x_n(t)), i=1,2,\cdots,m',$$
with algebraic form
$$u(t)=Gx(t),$$
where $G\in\mathcal{L}_{k^{m'}\times k^n}$ is the profile feedback gain matrix.

We sketch a control protocol for strategy consensus via aggregation: Firstly, we check whether there exists a partition $\mathcal{H}$ of $N$ satisfying the second condition of Theorem \ref{3.3}. If $\mathcal{H}$ does not exist, the network of the WNEG cannot be aggregated. Otherwise, we make players in each equivalence class play the FNG first to reach strategy consensus.
Finally, we solve the curse of dimensionality based on the game between equivalence classes. When the controlled super players on the aggregated network $(N_a,A_a)$ are identified, the corresponding controlled players on the network $(N,A)$ are found to reach the same goal of control.

If the network can be aggregated, the partition $\mathcal{H}$ satisfying the second condition of Theorem \ref{3.3} is identified. If the profile set is $S$, there are profiles such that the initial strategies of the players within equivalence class are not the same. To address the non-consensus within equivalence classes, we make each equivalence class achieve strategy consensus by the game between players within the equivalence class.
For every equivalence class $H_j, j\in[1:M]$, we denote the set of profiles for the players choosing the same strategy as
$\Theta_j=\{(\delta_{k}^i)^{n_j}|i=1,2,\cdots,k\}$.
We introduce control invariant subset.
\begin{definition} (\hskip -0.1mm\cite{Y19})  A subset $\Theta\subseteq\Delta_{k^n}$ is a control invariant subset of system \eqref{10.3} if for any $x(0)\in\Theta$, there exists a control sequence $u: \mathbb{N}\rightarrow\Delta_{k^m}$ such that $x(t)\in\Theta, \forall t\geq0$. A control invariant subset $I_c(\Theta)$ is the largest control invariant subset if each control invariant subset in $\Theta$ is a subset of it.\end{definition}

When control players are introduced into each equivalence class,
its evolutionary dynamics are a controlled system itself. Whether equivalence class $H_j, j\in[1:M]$ achieves strategy consensus can be verified by checking whether the controlled system converges to $I_c(\Theta_j)$ \cite{D17,H18}.

If equivalent class $H_j, j\in[1:M]$ reaches strategy consensus, then the time $t_j, j\in[1:M]$ to reach strategy consensus must be finite \cite{D17,H18}. Denote $t^*=\mathrm{max}\{t_j|j\in[1:M]\} $ as the initial time for games among equivalence classes. Since $\mathcal{H}$ satisfies the second condition of Theorem \ref{3.3} and the strategy of each player within the same equivalence class is the same at time $t, t\geq t^*$, then $\mathcal{H}$ is a backward equivalence.
Consequently, we search for a control protocol that enables the WNEG to achieve strategy consensus through system \eqref{3.5}.
The strategy set of super player $j$ is $I_c(\Theta_j)$, we denote  $I_c(\Theta_j)=\{(\delta_{k}^{j^1})^{n_j},(\delta_{k}^{j^2})^{n_j},\cdots,
(\delta_{k}^{j^{\alpha_j}})^{n_j}\}$.
Then, the set of profiles is
$\hat{S}=\{\delta_{k^{M}}^{\gamma}|\delta_{k^{M}}^{\gamma}=\ltimes_{j=1}^{M}\delta_{k}^{j^\sigma}, \sigma\in[1:\alpha_j]\}
:=\{\delta_{k^M}^{\beta_1},\delta_{k^M}^{\beta_2},\cdots,\delta_{k^M}^{\beta_\alpha}\},
 \alpha=\prod\limits_{i=1}^M\alpha_i$. Denote
$\hat{\Theta}=\bigcap\limits_{j=1}^M\{\delta_k^{j^1},\delta_k^{j^2},\cdots,\delta_k^{j^{\alpha_j}}\}$.
If $\hat{\Theta}\neq\emptyset$, we denote $\hat{\Theta}=\{\delta_k^{\mu_1},\delta_k^{\mu_2},\cdots,\delta_k^{\mu_a}\}$.
The set of profiles is
$\bar{\Theta}=\{(\delta_k^i)^M|i=\mu_1,\mu_2,\ldots,\mu_{a}\}
:=\{\delta_{k^M}^{\theta_1},\delta_{k^M}^{\theta_2},\cdots,\delta_{k^M}^{\theta_a}\}$
if all super players choose the same strategy.
For system \eqref{3.5},
the profile feedback control is $\hat{u}(t)=\hat{G}y(t),$
where $\bar{G}\in\mathcal{L}_{k^{m}\times k^{M}}$.
Split $\bar{L}$ into $k^m$ equal blocks as
$\bar{L}=[\bar{L}_1,\bar{L}_2,\cdots,\bar{L}_{k^m}]$,
where $\bar{L}_i\in\mathcal{L}_{k^M\times k^M}$ is profile transition matrix under the $i$-th control. Let $\hat{L}=\sum\limits_{i=1}^{k^m}\bar{L}_i$.
If the largest invariant subset $I_c(\bar{\Theta})$ in $\bar{\Theta}$ of system \eqref{3.5} is non-empty, we denote $I_c(\bar{\Theta})=\{\delta_{k^M}^{\lambda_1},\delta_{k^M}^{\lambda_2},\cdots
\delta_{k^M}^{\lambda_q}\}$ and take $\Xi=\sum_{i=1}^q\delta_{k^M}^{\lambda_i}$. Then, the WNEG achieve strategy consensus if and only if system \eqref{3.5} converges to $I_c(\bar{\Theta})$.
The criterion to verify the strategy consensus of the WNEG is hereby provided.
\begin{theorem}\label{6.2}(Algorithm \ref{7.6}) The WNEG with algebraic form \eqref{3.5} achieves strategy consensus if and only if
\begin{enumerate}
\item[$\mathrm{i)}$] $\hat{\Theta}\neq\emptyset$;
\item[$\mathrm{ii)}$] $I_c(\bar{\Theta})\neq\emptyset$;
\item[$\mathrm{iii)}$] there exists an integer $t_1\in[1:\alpha-q]$ such that for any initial profile $\delta_{k^{M}}^\omega\in\hat{S}$, $\Xi^T\mathrm{Col}_\omega(\hat{L}^{t_1})>0$.
\end{enumerate}
\end{theorem}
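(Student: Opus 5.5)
The plan is to reduce the consensus question for the original WNEG to a reachability question for the aggregated controlled system \eqref{3.5}, and then to express that reachability through powers of the Boolean-type matrix $\hat{L}$.

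\textbf{Reduction to the aggregated system.} Once every class $H_j$ has reached internal consensus (by time $t^*$), $\mathcal{H}$ is a backward equivalence by Theorem \ref{3.3}. Proposition \ref{6.1} (or the Case~1 proposition) then makes \eqref{10.3} equivalent to \eqref{3.5} for $t\ge t^*$. So it suffices to characterize when \eqref{3.5}, started from any profile in $\hat{S}$, can be driven into consensus and kept there.

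\textbf{Necessity.} Suppose the WNEG achieves consensus. Then for $t\ge t_0$ all super players share a strategy $\delta_k^{\mu}$. Since the states reachable for super player $j$ after $t^*$ lie in $I_c(\Theta_j)$, this $\delta_k^\mu$ belongs to every $\{\delta_k^{j^1},\dots,\delta_k^{j^{\alpha_j}}\}$, which gives i) $\hat\Theta\neq\emptyset$. The trajectory stays in $\bar\Theta$ forever under the chosen control, so the points it visits form a control invariant subset of $\bar\Theta$. Hence $I_c(\bar\Theta)\neq\emptyset$, which is ii), and the trajectory enters $I_c(\bar\Theta)$.

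For iii), the key observation is that $[\hat L^{t}]_{\lambda\omega}>0$ exactly when some control sequence takes $\delta_{k^M}^\omega$ to $\delta_{k^M}^\lambda$ in $t$ steps. This follows by induction on $t$, using $\hat L=\sum_i \bar L_i$ with each $\bar L_i$ logical. Reachability of $I_c(\bar\Theta)$ from every $\omega\in\hat S$ therefore means that, for each $\omega$, $\Xi^T\mathrm{Col}_\omega(\hat L^{t_\omega})>0$ for some $t_\omega$.

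\textbf{Uniform time bound (main obstacle).} Two things are needed: a single common $t_1$, and the bound $t_1\le \alpha-q$. The plan is to argue that a shortest controlled path from $\omega$ into $I_c(\bar\Theta)$ visits distinct states outside $I_c(\bar\Theta)$, of which there are at most $\alpha-q$ in $\hat S$. So each $t_\omega\le\alpha-q$. Once a trajectory is in $I_c(\bar\Theta)$, control invariance lets it stay there, so positivity persists for all larger times. Taking $t_1=\max_\omega t_\omega$ then works for every $\omega$ at once. Two points need care here: that $\hat S$ is closed under the aggregated dynamics, so that states outside $\hat S$ need not be counted; and the monotonicity in $t$.

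\textbf{Sufficiency.} Conversely, from i)–iii) the plan is to construct the control in three steps:
\begin{enumerate}
\item[1)] Use the controls that drive each class into $I_c(\Theta_j)$ by time $t^*$.
\item[2)] For each $\omega$, use the path witnessed by $\Xi^T\mathrm{Col}_\omega(\hat L^{t_1})>0$ to pick controls that reach $I_c(\bar\Theta)$.
\item[3)] Use control invariance to keep the trajectory in $I_c(\bar\Theta)\subseteq\bar\Theta$.
\end{enumerate}
These choices can be assembled into a feedback $\hat u=\hat G y$ because the choice at each state can be made state-dependently. Lifting via Proposition \ref{6.1} with $u_\alpha=\hat u_q$ gives $x_1(t)=\cdots=x_n(t)$ for all $t\ge t^*+t_1$.
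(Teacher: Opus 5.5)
Your route is the paper's: $t$-step controlled reachability is read off from $[\hat L^t]_{\lambda\omega}>0$, and control invariance of $I_c(\bar\Theta)$ keeps the trajectory in the consensus set for sufficiency. For necessity, each condition is argued separately. You are more careful than the paper in three places:
\begin{itemize}
\item Your persistence argument for $\Xi^T\mathrm{Col}_\omega(\hat L^t)>0$ is valid. The paper writes $\mathrm{Col}_\epsilon(\hat L^{t_1})$ as a combination of the $\delta_{k^M}^{\lambda_i}$ alone, which is false in general and only survives as an inequality by nonnegativity.
\item Your visited-set argument for ii) is correct. The paper's ``no $t_1$ with $y(t_1)\in\bar\Theta$'' is not the right negation, since a trajectory can pass through $\bar\Theta$ without staying.
\item You actually argue for a common $t_1$ and the bound $\alpha-q$, which the paper merely asserts.
\end{itemize}

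The gap is the closure of $\hat S$ under \eqref{3.5}. You flag it but never establish it, yet both your proof of i) and your bound $t_1\le\alpha-q$ rest on it. The justification you give in i), that after $t^*$ the strategy of super player $j$ stays in $I_c(\Theta_j)$, does not follow. $I_c(\Theta_j)$ is control invariant for the within-class controlled system of $H_j$. After $t^*$, however, super player $j$ evolves under \eqref{3.5}, best-responding to its neighbours in $(N_a,A_a)$ over all of $S_0$, and nothing keeps it in $\{\delta_k^{j^1},\ldots,\delta_k^{j^{\alpha_j}}\}$. If trajectories can leave $\hat S$, two things break:
\begin{itemize}
\item A shortest controlled path from $\omega$ into $I_c(\bar\Theta)$ may run through $\Delta_{k^M}\setminus\hat S$, so your pigeonhole count only gives $t_\omega\le k^M-q$.
\item A consensus at some $\mu\notin\hat\Theta$ is not excluded, so i) is not derived.
\end{itemize}
This cannot be proved from the setup. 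It is a modelling stipulation that the paper makes tacitly: ``the strategy set of super player $j$ is $I_c(\Theta_j)$'', that is, \eqref{3.5} is run on $\hat S$ with best responses taken over the restricted strategy sets. The paper's own necessity argument also uses it silently. State it as an explicit hypothesis and your argument goes through.

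Two smaller points. First, in the sufficiency part, do not assemble the feedback from the witness paths of length exactly $t_1$. Such a path may revisit a state with different controls, which no state feedback can reproduce. Instead choose, at each state, a control that lowers the distance to $I_c(\bar\Theta)$ by one; this is what the layered sets $\Psi_\tau$ in the paper's feedback construction do. Second, if $\alpha=q$ the range $[1:\alpha-q]$ is empty, so iii) can never hold although consensus is trivial. Your bound also fails there, since $t_\omega\ge 1$ for $\omega\in I_c(\bar\Theta)$. That is a defect of the statement, whose range should be $[1:\max\{1,\alpha-q\}]$.
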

The proof can be found in the Appendix A.
The Algorithm based on Theorem \ref{6.2} is in Appendix B. If conditions i), ii) and iii) in Theorem \ref{6.2} hold, we then have the profile feedback control to ensure that the WNEG with algebraic form \eqref{3.5} achieves strategy consensus (Appendix C). Based on the idea in Subsection B of Section 4, we map the control protocol obtained from the aggregated network $(N_a,A_a)$ to the network $(N,A)$. As a result, the WNEG with algebraic form \eqref{10.3} achieves strategy consensus.


\begin{theorem} \label{10.1} For system \eqref{3.5}, the time complexity based on Algorithm \ref{7.6} decreases by $k^{3(n-M)}$ compared with system \eqref{10.3}, where $n$ is the number of players and $M$ is the number of equivalence classes. The space complexity based on Algorithm \ref{7.6} decreases by $k^{2(n-M)+m'-m}$ compared with system \eqref{10.3}.\end{theorem}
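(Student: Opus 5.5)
This is a complexity count, so the plan is to write down what Algorithm \ref{7.6} costs when run on system \eqref{3.5} and on system \eqref{10.3}, and then take the ratio. The cost of Algorithm \ref{7.6} is governed by condition iii) of Theorem \ref{6.2}: form $\hat{L}$ from the blocks of the transition matrix, compute the powers $\hat{L}^{t}$ for $t$ up to the bound $\alpha-q$, and test $\Xi^T\mathrm{Col}_\omega(\hat{L}^{t})>0$ for each $\omega$. Conditions i) and ii) and the control-invariant-subset computations are treated as lower-order.

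\textbf{Time.} For the aggregated system \eqref{3.5}, each block $\bar{L}_i$ is a $k^M\times k^M$ matrix. One product of such matrices costs $O(k^{3M})$, and the dominant loop costs a number of products bounded in terms of the profile count ($\alpha\le k^M$). For the original system \eqref{10.3}, the same algorithm works with $k^n\times k^n$ blocks, so each product costs $O(k^{3n})$. This matches the $O(k^{3n})$ figure quoted in the Introduction. I would follow that convention and count the per-multiplication cost as the complexity measure. The ratio is then $k^{3n}/k^{3M}=k^{3(n-M)}$. The step I expect to be the main obstacle is keeping the number of iterations consistent between the two counts. If iteration counts are included, the bounds $\alpha-q$ differ between the systems, which would add an extra factor. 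I would either argue that the iteration count is treated as the same order in both analyses, or state explicitly that the comparison is per matrix-power step, as in the Introduction's accounting.

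\textbf{Space.} The dominant storage is the transition matrix itself. For \eqref{3.5}, $\bar{L}\in\mathcal{L}_{k^M\times k^{M+m}}$ has $k^{2M+m}$ entries. For \eqref{10.3}, $\bar{F}\in\mathcal{L}_{k^n\times k^{n+m'}}$ has $k^{2n+m'}$ entries. The auxiliary objects $\hat{L}$, its powers, and $\Xi$ take only $O(k^{2M})$ and $O(k^{2n})$ space respectively, so they do not change the order. Dividing gives $k^{2n+m'}/k^{2M+m}=k^{2(n-M)+m'-m}$, which is the stated reduction. In Case 1 ($m'=m$) this reduces to $k^{2(n-M)}$, consistent with the formula.
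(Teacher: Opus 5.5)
Your proposal is correct and follows the paper's argument. The paper also bounds Algorithm~\ref{7.6} step by step and finds the time dominated by the matrix power $\hat{L}^{\alpha-q}$, giving $O(k^{3M})$ versus $O(k^{3n})$, with the cost $O(k^{2M+m})$ of forming $\hat{L}$ and the reachable-set loop absorbed. It finds the space dominated by storing $\bar{L}$ versus $\bar{F}$, giving $O(k^{2M+m})$ versus $O(k^{2n+m'})$, and then takes ratios.

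Your concern about iteration counts is handled by the convention you propose. The algorithm forms only the single power $\hat{L}^{\alpha-q}$, because positivity of $\Xi^T\mathrm{Col}_\omega(\hat{L}^{t})$ persists in $t$ by control invariance of $I_c(\bar{\Theta})$. The paper charges this as one $O(k^{3M})$ computation, which is accurate up to the logarithmic factor from repeated squaring.
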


{\proof Step 1 of the algorithm first divides the matrix $\bar{L}$
into $k^m$ equal blocks, and then sums $k^m$ matrices of dimension $k^M\times k^M$. Matrix partitioning does not involve computation and only requires $O(1)$ operations. Matrix addition requires $O(k^{2M+m})$ operations, so the time complexity of Step 1 is $O(k^{2M+m})$.
Computing the one-step reachable set of profile $\delta_{k^M}^\omega$ only requires traversing the $|R_t(\bar{\Theta})|$ elements in the $\omega$-th column of the matrix $\hat{L}$, which takes $O(|R_t(\bar{\Theta})|)$ operations. Then the time required to compute the one-step reachable set of all profiles in $R_t(\bar{\Theta})$ is $O(|R_t(\bar{\Theta})|^2)$. Calculating $R_{t+1}(\bar{\Theta})$ requires $O(\sum\limits_{\delta_{k^M}^\omega\in R_{t}(\bar{\Theta})}|R_{t+1}(\delta_{k^M}^{\omega})|)$ operations. Therefore, the time complexity of Step 4 is $O(|R_t(\bar{\Theta})|^2+\sum\limits_{\delta_{k^M}^\omega\in R_{t}(\bar{\Theta})}|R_{t+1}(\delta_{k^M}^{\omega})|)$.
Step 5 requires $O(|R_t(\bar{\Theta})|+|R_{t+1}(\bar{\Theta})|)$ operations to check if $R_{t+1}(\bar{\Theta})\subseteq R_t(\bar{\Theta})$. The while loop is executed $t_0$ times, then the total time complexity is $O(\sum\limits_{t=0}^{t_0-1}|R_t(\bar{\Theta})|^2+\sum\limits_{\delta_{k^M}^\omega\in R_{t}(\bar{\Theta})}|R_{t+1}(\delta_{k^M}^{\omega})|)
+O(\sum\limits_{t=0}^{t_0-1}|R_t(\bar{\Theta})|+|R_{t+1}(\bar{\Theta})|)$.
Step 9 takes $O(qk^M)$ operations to compute the sum of $q$ profiles. In Step 10 of algorithm, the calculation of $\bar{L}^{\alpha-q}$ and $\Xi^T\mathrm{Col}_{\omega}(\hat{L}^{\alpha-q}), \delta_{k^{M}}^\omega\in \hat{S}\setminus\mathcal{R}_{t_0}(\bar{\Theta})$ requires $O(k^{3M}+(\alpha-q)k^M)$ operations. Step 11 is to check if $\Xi^T\mathrm{Col}_{\omega}(\hat{L}^{\alpha-q})>0, \delta_{k^{M}}^\omega\in \hat{S}\setminus\mathcal{R}_{t_0}(\bar{\Theta})$, then it requires $O(\alpha-q)$ operations. To summarize, the time complexity of Algorithm \ref{7.6} is $O(k^{3M})$.
The space complexity of the algorithm is mainly reflected in the storage of the matrices and sets.
For matrix $\bar{L}$, the space required to store all elements is
$O(k^{2M+m})$. Since $\bar{L}$ is a logical matrix, it is possible to store only the positions of the non-zero elements, then the space complexity is optimized to $O(k^{m+M})$. Similarly, the total space to store $\bar{L}_\tau, \tau\in[1:k^m]$ is either $O(k^{2M+m})$ or $O(k^{m+M})$. However, matrix $\hat{L}$ is not a logical matrix, the storage space required is $O(k^{2M})$. The storage space required for all sets during the while loop is $O((\sum\limits_{t=0}^{t_0}|\mathcal{R}_t(\bar{\Theta})|+\sum\limits_{t=0}^{t_0-1}\sum\limits_{\delta_{k^M}^\omega\in R_{t}(\bar{\Theta})}|R_{t+1}(\delta_{k^M}^{\omega})|)k^M)$.
The space required for matrix $\hat{L}^{\alpha-q}$ is $O(k^{2M})$. Thus, the space complexity of Algorithm \ref{7.6} is less than or equal to $O(k^{2M+m})$.

If the aggregation method is not used during the game between equivalence classes, then the time complexity for verifying whether the WNEG achieves strategy consensus is $O(k^{3n})$, and the maximum space complexity is $O(k^{2n+m'})$. Evidently, after aggregation, the time complexity decreases by $k^{3(n-M)}$, and the space complexity reduces by $k^{2(n-M)+m'-m}$. Furthermore, the reduction effect becomes more pronounced as the number of strategies increases.$\hfill\square$
}

We provide an illustrative example to show how the WNEG achieves strategy consensus via aggregation.
\begin{figure}[h]
  \center
  \scriptsize
\includegraphics[width=9cm,height=14cm]{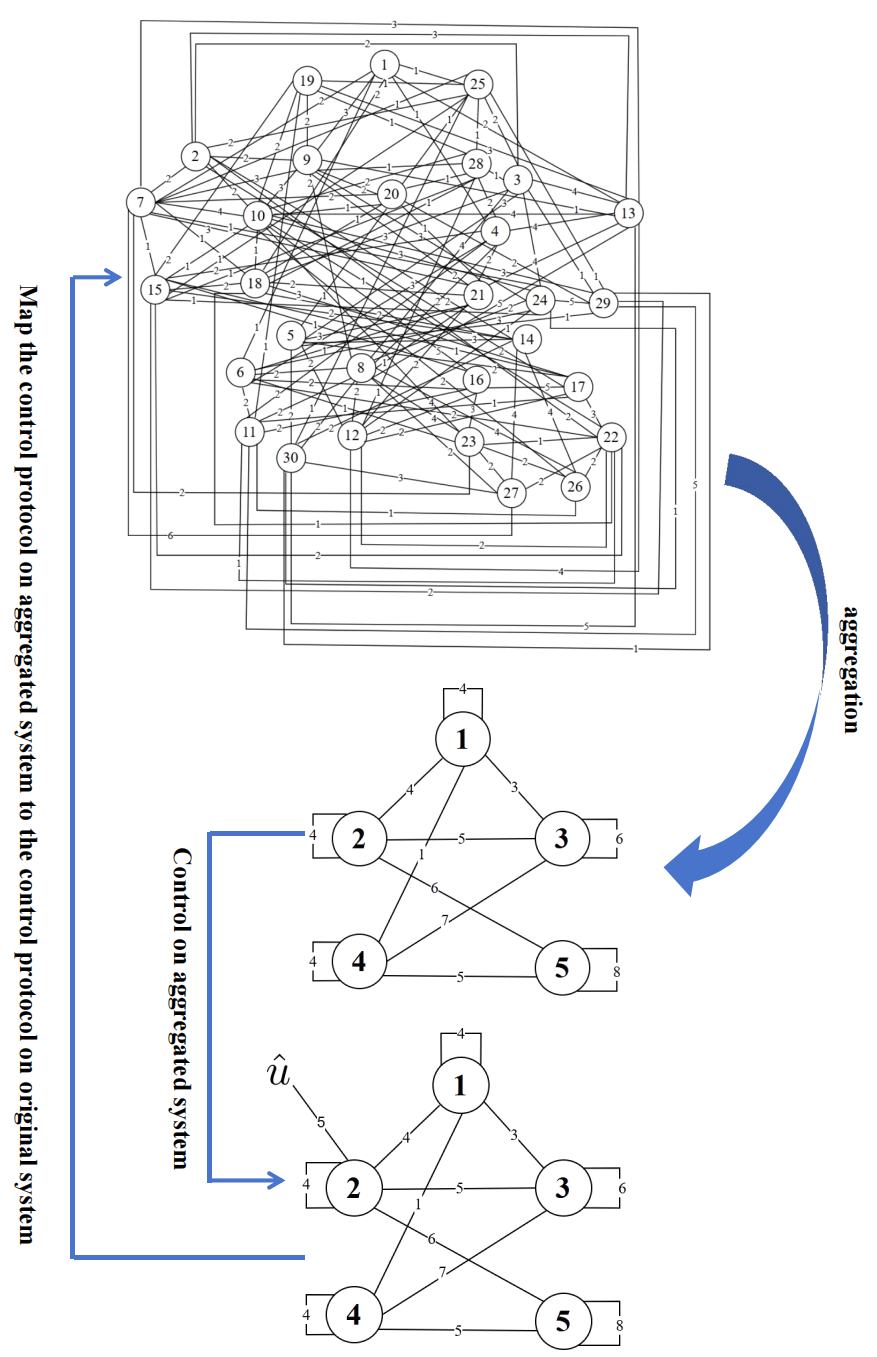}
  \caption{Illustration of aggregation. $H_1=\{1,9,19,20,25,28\}, H_2=\{2,7,10,15,18\},H_3=\{3,4,13,21,24,29\},H_4=\{5,6,8,11,12,30\}
$ and $H_5=\{14,16,17,22,23,26,27\}$ become super players 1,\;2,\;3,\;4 and 5, respectively.}\label{4}
  \vspace{-0.5em}
\end{figure}
Consider a WNEG, whose player set is $N=[1:30]$ and the strategy set of each player is $S_i=\{1,2\}, i\in N$. The network for the thirty players is in Fig. \ref{4}.
The payoff matrix is shown in Table \ref{7.8}. Players update strategies based on the myopic best response.

\begin{table}[h]
\caption{Payoff bi-matrix}
\begin{center}
\begin{tabular}{c|c|c}
\hline
\multicolumn{1}{c|}{$Player~1\backslash Player~2$}
&\multicolumn{1}{|c|}{1}
&\multicolumn{1}{|c}{2}\\ \hline
1&(0.8, 0.8) &(0.5, 0.6) \\ \hline
2 &(0.6, 0.5) &(1, 1) \\ \hline
\end{tabular}
\end{center}
\label{7.8}
\end{table}

First, a partition $\mathcal{H}=\{H_1,H_2,H_3,H_4,H_5,\}$ of $N$ is given, where $H_1=\{1,9,19,20,25,28\}, H_2=\{2,7,10,15,18\},H_3=\{3,4,13,21,24,29\},H_4=\{5,6,8,11,12,30\}
$ and $H_5=\{14,16,17,22,23,26,27\}$. For any $i,j\in H_\omega, \omega\in[1:5]$, we have \begin{equation}\label{8.2}\frac{\sum\limits_{p\in H_q}a_{pi}}{\sum\limits_{l\in N}a_{li}}=\frac{\sum\limits_{p\in H_q}a_{pj}}{\sum\limits_{l\in N}a_{lj}}, q\in[1:5].\end{equation}
The values of \eqref{8.2} for $\omega$ and $q$ are shown in Table \ref{8.1}.
\begin{table}[htb]
\caption{The values of \eqref{8.2} at different $\omega$ and $q$.}
\vskip 2mm
\begin{center}
\begin{tabular}{|c|c|c|c|c|c|}
  \hline
    \diagbox{$q$}{$\omega$}  &1 & 2 & 3&4&5 \\\hline
  1 & $\frac{1}{3}$ & $\frac{4}{19}$ & $\frac{1}{7}$&$\frac{1}{17}$&0 \\\hline
  2 &$\frac{1}{3}$&$\frac{4}{19}$&$\frac{5}{21}$&0  & $\frac{6}{19}$ \\\hline
  3 & $\frac{1}{4}$ &$\frac{5}{19}$& $\frac{2}{7}$& $\frac{7}{17}$& 0\\\hline
  4 & $\frac{1}{12}$& 0 & $\frac{1}{3}$&$\frac{4}{17}$&$\frac{5}{19}$ \\\hline
  5 &0& $\frac{6}{19}$ & 0&$\frac{5}{17}$&$\frac{8}{19}$ \\\hline
\end{tabular}
\end{center}
\label{8.1}
\end{table}

Then, we consider the strategy consensus of the WNEG. In the first stage, the strategy consensus of each equivalence class is investigated.
For the equivalence classes $H_1,H_2,\cdots,H_5$, we assume that the controlled players are 1,\;15,\;24,\;30,\;14. That is, there exist five edges between the control players $u^1,u^2,u^3,u^4,u^5$ and the controlled players 1,\;15,\;24,\;30,\;14 with weights of 2,\;3,\;4,\;3,\;4. Identify the strategy $1\sim\delta_2^1$ and $2\sim\delta_2^2$. The evolutionary equation of equivalence class $H_1$ is
\begin{equation}
x^1(t+1)=\tilde{F}_1u^1(t)x^1(t),
\end{equation}
where $x^1(t)=x_1(t)x_9(t)x_{19}(t)x_{20}(t)x_{25}(t)x_{28}(t), u^1(t)\in\Delta_2$ and $\tilde{F}_1=\delta_{64}[1,1,1,13,\cdots,64,64,64,64]\in\mathcal{L}_{64\times128}$.

All the players in $H_1$ either choose strategy `1' or strategy `2' when achieving strategy consensus, then $\Theta_1=\{\delta_{64}^1,\delta_{64}^{64}\}$. Split $\tilde{F}_1$ into 2 equal block as $\tilde{F}_1=[\tilde{F}_{1,1},\tilde{F}_{1,2}]$ and compute $\hat{F}_1=\tilde{F}_{1,1}+\tilde{F}_{1,2}$.
Since $\mathcal{R}_1(\Theta_1)=\Theta_1$, then $I_c(\Theta_1)=\Theta_1$. Let $\Phi_1=\delta_{64}^1+\delta_{64}^{64}$.
By Theorem \ref{6.2}, it is found that $\Phi_1^{T}\mathrm{Col}_i(\hat{F}_1^3)>0, \forall i\in[1:64]$. Then, the equivalence class $H_1$ reaches strategy consensus at time $t=3 $.
Similarly, we obtain that equivalence classes $H_2,H_3,H_4,H_5$ achieve consensus in strategy `1' or `2' at time $t=2,t=5,t=3,t=4$, respectively.

Since $\mathrm{max}\{2,3,4,5\}=5$, then all equivalence classes reach strategy consensus at time $t=5$.
When $t>5$, players start the game between equivalence classes using the strategy at the time $t = 5$ as the initial strategy. Because the initial strategies of the players within each equivalence class are the same and the partition $\mathcal{H}$ satisfies the second condition of Theorem \ref{3.3}, $\mathcal{H}$ is backward equivalent. Then, the network of the WNEG can be aggregated as shown in Fig. \ref{4}. Assume that the external control player set is $U=\{u_1,u_2,u_3,u_4,u_5\}$, which control players 2,\;7,\;10,\;15,\;18, respectively. We let the weights of the edges between the control players $u_1,u_4,u_5$ and controlled players 2,\;15,\;18 to be 5 and the weights of the edges between the control players $u_2,u_3$ and controlled  players 7,\;10 to be 10. Suppose that $u_1(t)=\cdots=u_5(t), t\geq5$. Then, in the aggregated network, we control the super player 2 and let the weight of the edge between $\hat{u}$ and the super player 2 be 5. By Proposition \ref{6.1}, we let $\hat{u}(t)=u_1(t)=\cdots=u_5(t), t\geq5$. Then, the evolutionary dynamics of controlled WNEG are equivalent to
\begin{equation}\label{10.5}
y(t+1)=\bar{L}\hat{u}(t)y(t),
\end{equation}
where $y(t)=\ltimes_{i=1}^5y_i(t), \hat{u}(t)\in\Delta_2$ and $\bar{L}=\delta_{32}[1,11,6,16,\cdots,32,32,32,32]\in\mathcal{L}_{32\times64}$.

Since $\hat{\Theta}=\{\delta_2^1,\delta_2^2\}$, then $\bar{\Theta}=\{\delta_{32}^1,\delta_{32}^{32}\}$. Split $\bar{L}$ into $\bar{L}=[\bar{L}_1,\bar{L}_2]$ and compute $\hat{L}=\bar{L}_1+\bar{L}_2$. We obtain $I_c(\bar{\Theta})=\bar{\Theta}$. Let $\Xi=\delta_{32}^1+\delta_{32}^{32}$, one has $\Xi^T\mathrm{Col}_i(\hat{L}^5)>0, \forall i\in[1:32]$. So, the WNEG with algebraic
form \eqref{10.5} achieves strategy consensus under the profile feedback control $\hat{u}(t)=\bar{G}y(t)$. The corresponding profile feedback gain matrix is
$$\left\{
  \begin{array}{ll}
    \mathrm{Col}_i(\bar{G})=\delta_2^1, & i\in\{1,17\}; \\
    \mathrm{Col}_i(\bar{G})=\delta_2^2, & i\in\{5,7,13,15,19,25,27\} \\
   \mathrm{Col}_i(\bar{G})\in\Delta_2, & \mathrm{otherwise}.
  \end{array}
\right.$$
Next, we map the control protocol on aggregated system to the control protocol on original system. The profile feedback gain  matrix is
$$\left\{
  \begin{array}{ll}
    \mathrm{Col}_i(G)=\delta_{32}^1, & i\in\{1,538971173\}; \\
    \mathrm{Col}_i(G)=\delta_{32}^{32}, & i\in\{201458243,256770628,\\&479367747,534680132,\\&594283558,816880677,\\&872193062\} \\
   \mathrm{Col}_i(G)\in\{\delta_{32}^1,\delta_{32}^{32}\}, & \mathrm{otherwise}.
  \end{array}
\right.$$
{\rem Based on Theorem \ref{10.1}, the time complexity of verifying the strategy consensus of the WNEG decreases by $2^{75}$, and the space complexity decreases by $2^{54}$.}
\subsection{Strategy optimization of the WNEG via aggregation}
In this section, we consider the problem of strategy optimization of the WNEG. Our goal is to ensure that the total payoff of all participants is greater than or equal to a target value $Q$ after a certain period of time by adjusting the strategies of the control players.

We sketch the control protocol for strategy optimization via
aggregation: Firstly, we check the network of the WNEG can be aggregated based on the second condition of Theorem \ref{3.3}. Secondly, if the network can be aggregated, we ensure that the initial strategies of players are the same within each equivalence class. Finally, we map the control scheme obtained from the controlled system on the aggregated network, which enables the WNEG to achieve the strategy optimization objective, to the original controlled system.

If the partition $\mathcal{H}$ of $N$ satisfying the second condition of Theorem \ref{3.3} exists, then $\mathcal{H}$ is identified. We constrain the initial profiles within $S'$, $\mathcal{H}$ is a backward equivalence. Consequently, we resort to system \eqref{3.5} to figure out a control protocol that enables the WNEG to achieve the strategy optimization goal.

Denote $B=(b_{ij})_{M\times M}$ and $b_{ij}=\frac{\bar{a}_{ij}}{\sum\limits_{l=1}^M\bar{a}_{lj}}, \forall i,j\in[1:M] $.
Denote $ \Lambda_M^i=\left\{
       \begin{array}{ll}
        I_k\otimes1_{k^{M-1}}^T, & i=1;  \\
         1_{k^{i-1}}^T\otimes I_k\otimes1_{k^{M-i}}^T, & i\in[2:M-1];  \\
         1_{k^{M-1}}^T\otimes I_k, & i=M;
       \end{array}
     \right.
$ and $\Lambda=[\Lambda_M^1,\Lambda_M^1,\cdots,\Lambda_M^M]^T$.
The weighted average payoff function of super player $i, i\in[1:M]$ is transformed into the following form:
\begin{align*}
p_i(y_i(t),y_{-i}(t))=&\sum\limits_{j=1}^{M}b_{ji}V_r^{T}(C)y_i(t)y_j(t)\\
=&V_r^{T}(C)(I_{k}\otimes\mathrm{Col}_i^{T}(B)\Lambda)\Lambda_M^iM_{r,k^M}y(t)\\
:=&V_iy(t)
\end{align*}
where $M_{r,k^M}=diag(\delta_{k^M}^1,\delta_{k^M}^2,\cdots,\delta_{k^M}^{k^M})$.
The total payoff of all players in the WNEG is $V=\sum\limits_{i=1}^M|H_i|V_i$. Denote $\Omega=\{\delta_{k^M}^i|\mathrm{Col}_i(V)\geq Q\}$.
For any profile $\delta_{k^M}^i\in\Omega$, the total payoff of all players corresponding to this profile is always greater than or equal to $Q$. Therefore, the WNEG achieving the strategy optimization goal is equivalent to system \eqref{3.5} converging to $\Omega$. To this end, we first compute the largest control invariant subset $I_c(\Omega)$ in $\Omega$ of system \eqref{3.5} \cite{Z20,Y21}.
When $I_c(\Omega)\neq\emptyset$, we construct the truth matrices  $T_{\overline{W}_{t-1}|W_t}\in\mathcal{B}_{k^m\times k^M}, t\geq0$ as follows:
$$[T_{\overline{W}_{t}|W_{t+1}}]_{ij}=\left\{
                               \begin{array}{ll}
                                 1, & \mathrm{if}\; \bar{L}\delta_{k^m}^i\delta_{k^M}^j\in \overline{W}_{t}, \forall\delta_{k^M}^j\in W_{t+1}  \\
                                 0, & \mathrm{otherwise},
                               \end{array}
                             \right.$$
where $\left\{
         \begin{array}{ll}
           \overline{W}_t=I_c(\Omega), & t=0 \\
           \overline{W}_t=\{\delta_{k^M}^j|\mathrm{Col}_j(T_{\overline{W}_{t-1}|W_t})\neq\mathbf{0}_{k^m}\}, & t\geq1
         \end{array}
       \right.$
 and $W_{t+1}=\Delta_{k^M}\setminus\bigcup\limits_{q=0}^t\overline{W}_q$.

The following result is to verify the strategy optimization of the WNEG, which comes from \cite{Z20,Y21}.
{\lem\label{6.5}(Algorithm \ref{suanfa1}) The WNEG with algebraic form \eqref{3.5} achieves the strategy optimization objective if and only if
\begin{enumerate}
\item[$\mathrm{i)}$] $I_c(\Omega)\neq\emptyset$ ;
\item[$\mathrm{ii)}$] there exist an integer $t_2\in[1:k^M-|I_c(\Omega)|]$, such that
$$\mathrm{Col}_j(\hat{T}_1)\neq\mathbf{0}_{k^m}, \forall j\in[1:k^M],$$
\end{enumerate}
where $\hat{T}_1=T_{\overline{W}_0|\overline{W}_0}+\sum\limits_{t=0}^{t_2-1}
T_{\overline{W}_{t}|W_{t+1}}$.
}
The Algorithm based on Lemma \ref{6.5} is in Appendix B. If conditions i) and ii) in Lemma \ref{6.5} hold, we then give the profile feedback control to ensure that the WNEG with algebraic form \eqref{3.5} achieves strategy optimization objective (Appendix C). Furthermore, the control protocol on the network $(N,A)$ that enables the WNEG with algebraic form \eqref{10.3} to achieve strategy optimization goal is presented.

\begin{theorem} \label{10.6} For system \eqref{3.5}, the
time complexity for verifying whether the WNEG achieves the strategy optimization objective decreases by $k^{(n-M)+m'-m}$ compared with system \eqref{10.3}. The space complexity decreases by $k^{2(n-M)+m'-m}$ compared with system \eqref{10.3}.
\end{theorem}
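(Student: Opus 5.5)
Following the template of the proof of Theorem \ref{10.1}, I will go through Algorithm \ref{suanfa1} (the algorithm implementing Lemma \ref{6.5}) step by step. For each step I will count operations and storage for the aggregated system \eqref{3.5}, where $\bar{L}\in\mathcal{L}_{k^M\times k^{M+m}}$. I will then redo the count with $n$ in place of $M$ and $m'$ in place of $m$ for system \eqref{10.3}, and take the ratio of the dominant terms.

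\textbf{Step 1: forming $\Omega$.} The matrices $V_i$ are $1\times k^M$ rows. The entries of $V=\sum_i|H_i|V_i$ can be evaluated profile by profile, and comparing each entry with $Q$ costs $O(k^M)$. I will argue that this preprocessing is not the dominant term. This relies on the fact that $B$, $\Lambda$, $M_{r,k^M}$ are structured, so each $\mathrm{Col}_j(V)$ can be obtained directly from the profile.

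\textbf{Step 2: computing $I_c(\Omega)$.} I will use the reachable-set iteration from \cite{Z20,Y21}. Since $\bar{L}$ is logical, each profile has at most $k^m$ successors. One pass over $\Omega$ therefore costs $O(k^{M+m})$, and the number of passes is bounded by $|\Omega|$. I will aim to show this is absorbed into the main term below.

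\textbf{Step 3: the truth matrices.} For each layer $t$, filling $T_{\overline{W}_t|W_{t+1}}\in\mathfrak{B}_{k^m\times k^M}$ requires one membership test of $\bar{L}\delta_{k^m}^i\delta_{k^M}^j$ in $\overline{W}_t$ for every pair $(i,j)$ with $\delta_{k^M}^j\in W_{t+1}$. The sets $W_{t+1}$ are disjoint across $t$ and their union is contained in $\Delta_{k^M}$. Hence, summed over $t\in[0:t_2-1]$, the total work is $O(k^{m}\cdot k^M)=O(k^{M+m})$, provided membership in $\overline{W}_t$ is tested in $O(1)$ with an indicator vector. Forming $\hat T_1$ and checking its $k^M$ columns for non-zero entries is also $O(k^{M+m})$. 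So the verification time is $O(k^{M+m})$. The identical count for \eqref{10.3} gives $O(k^{n+m'})$, and the ratio is $k^{(n-M)+m'-m}$, as claimed.

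\textbf{Space.} I store $\bar{L}$ in full, which takes $O(k^{2M+m})$, together with the truth matrices ($O(k^{M+m})$) and the sets ($O(k^M)$ each). For \eqref{10.3} the same storage is $O(k^{2n+m'})$, giving the reduction $k^{2(n-M)+m'-m}$. This matches the space part of Theorem \ref{10.1}, since the dominant object is the same.

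\textbf{Main obstacle.} The hardest step is justifying that computing $I_c(\Omega)$ does not exceed $O(k^{M+m})$. A naive fixed-point loop costs up to $|\Omega|\cdot k^{M+m}$. I would first try a predecessor-count (backward pruning) implementation: each transition is examined once when its target is removed from the candidate set, which gives $O(k^{M+m})$ total. If that cannot be cleanly argued, I would state the bound in the same style as Theorem \ref{10.1}. That is, I would express it in terms of the set sizes $|\mathcal{R}_t|$ and note that the corresponding quantities in the unaggregated system scale by the same factor, so the ratio still comes out to $k^{(n-M)+m'-m}$.
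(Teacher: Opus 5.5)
Your route is the paper's: you count operations for computing $I_c(\Omega)$ (Algorithm 1 of \cite{Z20}) and for Algorithm~\ref{suanfa1}, obtain $O(k^{M+m})$ time and, with $\bar{L}$ stored densely, $O(k^{2M+m})$ space, then substitute $n$ and $m'$. However, the amortization in your Step 3 rests on a false claim: the sets $W_{t+1}$ are not disjoint across $t$. By definition $W_{t+1}=\Delta_{k^M}\setminus\bigcup_{q=0}^{t}\overline{W}_q$, so $W_{t+2}=W_{t+1}\setminus\overline{W}_{t+1}\subseteq W_{t+1}$. The sets are nested; it is the layers $\overline{W}_t$ that are pairwise disjoint. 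Filling $T_{\overline{W}_t|W_{t+1}}$ by scanning every pair $(i,j)$ with $\delta_{k^M}^j\in W_{t+1}$ therefore costs $\sum_{t=0}^{t_2-1}|W_{t+1}|k^m$ in total. In the worst case (e.g.\ every layer $\overline{W}_t$ a singleton, so $t_2\approx k^M$) this is of order $k^{2M+m}$, not $k^{M+m}$. The same problem affects your unjustified claim that $\widehat{T}_1$ costs $O(k^{M+m})$: forming it as a sum of $t_2$ dense matrices costs $O(t_2k^{M+m})$.

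The fix is the predecessor-list idea you already propose for $I_c(\Omega)$. In one pass over $\bar{L}$, build for each profile the list of pairs $(i,j)$ with $\bar{L}\delta_{k^m}^i\delta_{k^M}^j$ equal to it; this costs $O(k^{M+m})$ time and space. Once $\overline{W}_t$ is known, scan only the predecessor lists of its elements and keep the sources not yet assigned to $\overline{W}_0,\dots,\overline{W}_t$. These are exactly the $1$-entries of $T_{\overline{W}_t|W_{t+1}}$. Each transition has its target in at most one $\overline{W}_t$, so every transition is examined once overall, giving $O(k^{M+m})$. Also, the nonzero columns of $T_{\overline{W}_t|W_{t+1}}$ lie in $\overline{W}_{t+1}$, and these are disjoint, so $\widehat{T}_1$ can be assembled column by column in $O(k^{M+m})$.

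With this repair, plus your predecessor-count pruning for $I_c(\Omega)$, your argument is tighter than the paper's. The paper writes the $I_c(\Omega)$ cost as $O(\sum_t|\Omega_t|k^m)$, the loop cost as $O(\sum_t|W_{t+1}|k^m)$ and Step 9 as $O(t_2k^{M+m})$, then simply asserts an overall $O(k^{M+m})$. Your space count coincides with the paper's. Note, though, that the ratio $k^{2(n-M)+m'-m}$ depends on storing $\bar{L}$ densely. With the successor/predecessor representation your time bound uses, space would be $O(k^{M+m})$; the paper adopts the same dense-storage convention.
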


{\proof Before analyzing the time complexity and  space complexity of Algorithm \ref{suanfa1}, we first present the time complexity and space complexity for calculating the largest control invariant subset $I_c(\Omega)$ by referring to Algorithm 1 in \cite{Z20}. Here, the input values are $\Omega$ and $\bar{L}$, the output value is $I_c(\Omega)$. In the calculation process, $\Theta_t, t\in[0:\hat{t}]$ is replaced by $\Omega_t$. Step 3 of the algorithm is to construct a $k^m\times k^M$-dimensional truth matrix $T_{\Omega_t|\Omega_t}$. Since we only focus on the profiles within $\Omega_t$, the time complexity of Step 3 is $O(|\Omega_t|k^m)$.
Step 5 of the algorithm constructs a new set by traversing the elements of $|\Omega_t|$ columns of the matrix $T_{\Omega_t|\Omega_t}$, which requires $O(|\Omega|k^m)$ operations. Comparing whether sets $\Omega_t$ and $\Omega_{t+1}$ are equal requires $O(|\Omega_t|+|\Omega_{t+1}|)$ operations.
The total time complexity of computing $I_c(\Omega)$ is the time required for the while loop, which is $O(\sum\limits_{t=0}^{\hat{t}-1}|\Omega_t|k^m)+O(\sum\limits_{t=0}^{\hat{t}-1}|\Omega_t|k^m)+
O(\sum\limits_{t=0}^{\hat{t}-1}|\Omega_t|+|\Omega_{t+1}|)$.
Then, we analyse the space complexity of calculating $I_c(\Omega)$.
Considering different storage methods, the space required for matrix $\bar{L}$ is either $O(k^{2M+m})$ or $O(k^{m+M})$. Similarly, if the conventional storage method is used for the truth matrix $T_{\Omega_t|\Omega_t}, t\in[0:\hat{t}-1]$, it requires $O(\hat{t}k^{m+M})$ space. Since the truth matrix $T_{\Omega_t|\Omega_t}$ contains many all-zero columns, a column compression can be used for storage, which only requires $O(\hat{t}(k^m+2k^M+1))$ space.
The space required to store set $\Omega_t, t\in[0:\hat{t}]$ is $O(\sum\limits_{t=1}^{\hat{t}}|\Omega_t|k^M)$. In summary, the space complexity of this algorithm is at most $O(k^{2M+m})$.

Using the same analytical approach, steps 1-5 of Algorithm \ref{suanfa1} require $O(\bigcup\limits_{i=0}^t|\overline{W}_i|)+O(|W_{t+1}|k^m)+O(|W_{t+1}|k^m)$ operations. The time complexity of while loop is $O(\sum\limits_{t=0}^{t_2-1}\bigcup\limits_{i=0}^t|\overline{W}_i|)+O(\sum\limits_{t=0}^{t_2-1}|W_{t+1}|k^m)
+O(\sum\limits_{t=0}^{t_2-1}|W_{t+1}|k^m), t_2\in[1:k^M-|I_c(\Omega)|]$.
Step 9 of the algorithm requires $O(t_2k^{M+m})$ operations to compute the sum of $t_2$ truth matrices. Step 11 is to check whether each column of matrix $\hat{T}_1$ is non-zero, which requires $O(k^{M+m})$ operations. Therefore, the total time complexity of Algorithm \ref{suanfa1} is $O(k^{M+m})$. The space complexity of this algorithm mainly originates from the storage of sets and truth matrices. For the set $\overline{W}_t, t\in[1:t_2]$, the required storage space is $O(\sum\limits_{t=1}^{t_2}|\overline{W}_t|k^M)$. The storage space required for the truth matrices is $O(t_2k^{M+m})$ or $O(t_2(k^m+2k^M+1))$. Storing truth matrix $\hat{T}_1$ requires $O(k^{m+M})$ space.
Thus, the space complexity of Algorithm \ref{suanfa1} is $O(k^{M+m})$.
In conclusion, based on Algorithm 1 in \cite{Z20} and Algorithm \ref{suanfa1}, the time complexity for verifying whether the WNEG achieves the strategy optimization goal is $O(k^{M+m})$. The space complexity is less than or equal to $O(k^{2M+m})$.

However, if we use system \eqref{10.3} to study the strategy optimization problem of the WNEG, the time complexity is $O(k^{n+m'})$. The space complexity is less than or equal to $O(k^{2n+m'})$. Thus, for system \eqref{3.5}, the time complexity and space complexity decrease by $k^{n-M+m'-m}$ and $k^{2(n-M)+m'-m}$, respectively.
$\hfill\square$

We present an example to illustrate how to obtain a control protocol that enables the WNEG to achieve strategy optimization objective via aggregation. Consider the WNEG in the example of subsection A. We assume that the initial profile set is $S'$, $\mathcal{H}$ is a backward equivalence. The controlled players, as well as the weights of the edges between the controlled players and control players, are the same as those in the previous example.
Then, the evolutionary equation of the controlled WNEG on the aggregated network is
\begin{equation}\label{7.7}
y(t+1)=\bar{L}u(t)y(t),
\end{equation}
where  $\bar{L}=\delta_{32}[1,8,2,16,\cdots,32,32,32,32]\in\mathcal{L}_{32\times64}$.

We compute the total payoff of all players as
$$V=[24,22.8,22.06,23.38,\cdots,25.69,26.04,30]\in\mathbb{R}_{1\times32}.$$
Let $Q=24$, then $\Omega=\{\delta_{32}^i|\mathrm{Col}_i(V)\geq24\}=\{\delta_{32}^1,\delta_{32}^8,
\delta_{32}^{12},\delta_{32}^{16},\delta_{32}^{23},\delta_{32}^{24},\delta_{32}^{28},\delta_{32}^{30},\delta_{32}^{31},
\delta_{32}^{32}\}$.
We first calculate the largest control invariant subset $I_c(\Omega)$. The truth matrix $T_{\Omega|\Omega}\in\mathcal{B}_{2\times32}$ is constructed as
$$\mathrm{Col}_i(T_{\Omega|\Omega})=\left\{
  \begin{array}{ll}
    (1,1)^{T}, & i\in\{1,8,12,16,23,24,28,30,31,\\&32\}; \\
    (0,0)^T, & \mathrm{otherwise}.
  \end{array}
\right.$$
One has $\mathcal{R}_1(\Omega)=\Omega$, then $I_c(\Omega)=\Omega$.
Denote $\overline{W}_0=I_c(\Omega)$ and $W_1=\Delta_{32}\setminus\overline{W}_0$. We construct the truth matrix $T_{\overline{W}_0|W_1}\in\mathcal{B}_{2\times32}$ as
$$\mathrm{Col}_i(T_{\overline{W}_0|W_1})=\left\{
  \begin{array}{ll}
  (0,1)^T, & i\in\{11,13,25\};\\
(1,1)^{T}, & i\in\{2,4,6,7,8,10,14,15,18,\\&19,22,26,27,29\}; \\
    (0,0)^T, & \mathrm{otherwise}.
  \end{array}
\right.$$
We obtain
\begin{align*}
\overline{W}_1=\delta_{32}\{&2,4,6,7,8,10,11,13,14,15,18,19,22,25,
26,\\&27,29\}.\end{align*}
Denote $W_2=\Delta_{32}\setminus(\overline{W}_0\cup \overline{W}_1)$. We continue to construct the truth matrix $T_{\overline{W}_1|W_2}\in\mathcal{B}_{2\times32}$ as
$$\mathrm{Col}_i(T_{\overline{W}_1|W_2})=\left\{
  \begin{array}{ll}
  (0,1)^T, & i\in\{5,17\};\\
(1,1)^{T}, & i\in\{3,9,21\}; \\
    (0,0)^T, & \mathrm{otherwise}.
  \end{array}
\right.$$
We have $\overline{W}_2=\delta_{32}\{3,5,9,17,21\}$.

Since $$\mathrm{Col}_i(T_{\Omega|\Omega}+T_{\overline{W}_0|W_1}+T_{\overline{W}_1|W_2})\neq\mathbf{0}_2, \forall i\in[1:32],$$
then the WNEG reaches strategy optimization objective under the profile feedback control $\hat{u}(t)=\bar{G}y(t)$.
The profile feedback gain matrix is
$$\left\{
  \begin{array}{ll}
    \mathrm{Col}_i(\bar{G})=\delta_2^2, & i\in\{5,11,13,17,25\}; \\
   \mathrm{Col}_i(\bar{G})\in\Delta_2, & \mathrm{otherwise}.
  \end{array}
\right.$$
Then, the profile feedback control corresponding to the original controlled system is given. The profile feedback gain matrix is
$$\left\{
  \begin{array}{ll}
    \mathrm{Col}_i(G)=\delta_{32}^{32}, & i\in\{201458243,333221890,\\&479367747,538971173,\\&816880677\}; \\
   \mathrm{Col}_i(G)\in\{\delta_{32}^{1},\delta_{32}^{32}\}, & \mathrm{otherwise}.
  \end{array}
\right.$$
{\rem Based on Theorem \ref{10.6}, the time complexity of verifying the strategy optimization of the WNEG decreases by $2^{29}$, and the space complexity decreases by $2^{54}$.}

\subsection{Controllability and optimal control of the WNEG via aggregation}
The controllability defined as follows is investigated based on profile feedback control.
\begin{definition} (\hskip -0.1mm\cite{D22}) Consider system \eqref{10.3}, a initial profile $x_0\in \Delta_{k^n}$ and a destination profile  $x_d\in\Delta_{k^n}$.
\begin{enumerate}
\item[$\mathrm{i)}$] System \eqref{10.3} is said to controllable from $x_0$ to $x_d$, if there exists a integer $K>0$ and a control sequence $u:[0:K-1]\rightarrow\Delta_{k^{m'}}$ such that $x_0$ is driven by control sequence to $x_d$;
\item[$\mathrm{ii)}$] System \eqref{10.3} is said to be controllable at $x_0$, if system \eqref{10.3} is controllable from $x_0$ to any $x_d$.
\item[$\mathrm{iii)}$] System \eqref{10.3}  is said to be controllable, if system \eqref{10.3} is controllable at any $x_0$.
\end{enumerate}
\end{definition}
The outline of control protocol for controllability is the same as that for strategy optimization. That is, if the network of the WNEG can be aggregated, we identify the partition $\mathcal{H}$ that satisfies the second condition of Theorem \ref{3.3}.
Similarly, we restrict the initial profile set to $S'$, then $\mathcal{H}$ is backward equivalent. When the profile $x(0)=x_0\in S'$, we obtain $x(K)=x_d\in S'$ by Theorem \ref{3.3}.
We let $x_i(t)=y_j(t), i=j_1,j_2,\cdots,j_{n_j}, j\in[1:M], t=0,K$ and compute $y(t)=\ltimes_{j=1}^My_j(t)$. Denote $y_0=y(0)$ and $y_d=y(K)$, then system \eqref{3.5} is to explore the control scheme for controllability of the WNEG, with reference to \cite{H23}.
{\lem \label{6.7} (Algorithm \ref{7.5}) Consider system \eqref{3.5} and two profiles $y_0=\delta_{k^M}^{i}$ and $y_d=\delta_{k^M}^{j}$. Then,
\begin{enumerate}
\item[$\mathrm{i)}$] system \eqref{3.5} is controllable from $y(0)=\delta_{k^M}^{j}$ to $y(K)=\delta_{k^M}^{i}$ if and only if $$[\hat{L}^K]_{ij}>0.$$
\item[$\mathrm{ii)}$] system \eqref{3.5} is controllable at $\delta_{k^M}^{j}$ if and only if
$$\sum\limits_{K=1}^{k^M}\mathrm{Col}_j(\hat{L}^K)>0.$$
\item[$\mathrm{iii)}$] system \eqref{3.5} is controllable if and only if $$\sum\limits_{K=1}^{k^M}\hat{L}^K>0.$$
\end{enumerate}
where $\hat{L}$ is constructed in subsection A.
}

The Algorithm based on Lemma \ref{6.7} is in Appendix B.
When system \eqref{3.5} is controllable, we map the control protocol to system \eqref{10.3}, then the WNEG is controllable.
\begin{theorem} \label{10.7} For system \eqref{3.5}, the time complexity based on Algorithm \ref{7.5} decreases by $k^{4(n-M)}$ compared with system \eqref{10.3}. The space complexity based on Algorithm \ref{7.5} decreases by $k^{3(n-M)}$ compared with system \eqref{10.3}.\end{theorem}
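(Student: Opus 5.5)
I will follow the same pattern as the proofs of Theorems \ref{10.1} and \ref{10.6}: first bound the time and space cost of Algorithm \ref{7.5} run on the aggregated system \eqref{3.5}, then bound the cost of the same procedure run on the original system \eqref{10.3}, and finally compare the two bounds.

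For \eqref{3.5}, the first step forms $\hat{L}=\sum_{i=1}^{k^m}\bar{L}_i$. Partitioning $\bar{L}$ is free, and the summation costs $O(k^{2M+m})$. The dominant work is checking condition iii) of Lemma \ref{6.7}. This requires the powers $\hat{L}^K$ for $K\in[1:k^M]$ together with their running sum.
\begin{itemize}
\item Each power comes from the previous one by one dense $k^M\times k^M$ multiplication, costing $O(k^{3M})$.
\item Over $k^M$ iterations this totals $O(k^{4M})$.
\item Accumulating the sum costs a further $O(k^{3M})$ and is dominated.
\item Checking positivity of a single entry (condition i)) or a single column (condition ii)) is a negligible $O(k^{2M})$ on top of this.
\end{itemize}
Since $m\le M$ in the relevant regime, the $k^{2M+m}$ term from forming $\hat{L}$ is also dominated, so the total time is $O(k^{4M})$.

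For space, storing $\bar{L}$ needs $O(k^{2M+m})$ in general, or $O(k^{M+m})$ if only the positions of its nonzero entries are kept. The matrices $\hat{L}$, the current power, and the running sum are not logical, so each needs $O(k^{2M})$.

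Here I have to decide what the stated space gain of $k^{3(n-M)}$ refers to, since it is not matched by merely storing matrices. I would take the bound to correspond to keeping all powers $\hat{L}^K$, $K\in[1:k^M]$, so that the three statements of Lemma \ref{6.7} can be read off for every $K$. Under that choice the space is $O(k^M\cdot k^{2M})=O(k^{3M})$, which dominates the storage of $\bar{L}$.

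For the original system \eqref{10.3}, the same algorithm uses $\hat{F}=\sum_{i}\bar{F}_i$ of size $k^n\times k^n$ and powers up to $k^n$. Replacing $M$ by $n$ (and $m$ by $m'$) in the analysis above gives time $O(k^{4n})$ and space $O(k^{3n})$. Taking ratios yields the claimed reductions of $k^{4(n-M)}$ in time and $k^{3(n-M)}$ in space.

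The main obstacle is making the two analyses consistent, so that the $m$ and $m'$ terms genuinely drop out. I plan to handle this by assuming $m\le M$ and $m'\le n$, so that the matrix-power terms dominate in both systems.
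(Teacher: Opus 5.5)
Your proposal matches the paper's proof. Both bound the time by the $k^M$ successive $O(k^{3M})$ multiplications needed for $\sum_{\mu=1}^{k^M}\hat{L}^{\mu}$ (total $O(k^{4M})$) and the space by storing all the powers $\hat{L}^{\mu}$ (total $O(k^{3M})$, which is exactly the storage assumption the paper makes). Both then substitute $n$ for $M$ to get $O(k^{4n})$ and $O(k^{3n})$ for system \eqref{10.3} and take ratios. Your extra assumptions $m\le M$ and $m'\le n$ need no justification, since they hold automatically from $X_a\subseteq N_a$ and $X\subseteq N$.
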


{\proof
Step 1 of the algorithm is matrix partitioning, which requires $O(1)$ operations. The time complexity of computing matrix $\hat{L}\in\mathbb{R}_{k^M\times k^M}$ is $O(k^{2M+m})$. When $K=1$, calculating $\hat{L}^K$ requires $O(1)$ operations. When $K\geq2$, the calculation of $\hat{L}^K$ requires $O(k^{3M})$ operations.
The time complexity of Step 2 is $O(k^{2M+m}+1)$ or $O(k^{2M+m}+k^{3M})$. Step 3 requires $O(1)$ operations to check whether $[\hat{L}^K]_{ij}$ is nonzero. Step 6 first calculates $\hat{L}^\mu, \mu\in[1:k^M]$, followed by calculating $\sum\limits_{\mu=1}^{k^M}\hat{L}^\mu$, so the time complexity is $O(1+(k^M-1)k^{3M}+k^{3M})$. Step 7 requires $O(k^M)$ operations to check whether a column of the matrix is nonzero. Step 10, which is to check whether each element of the matrix is nonzero, requires $O(k^{2M})$ operations. In summary, the time complexity of Algorithm \ref{7.5} is $O(k^{4M})$. The space complexity of Algorithm \ref{7.5} mainly originates from the storage of matrices. For matrix $\bar{L}$, the storage space required by conventional storage and compressed storage are $O(k^{2M+m})$ and $O(k^{M+m})$, respectively. After dividing $\bar{L}$ into $k^m$ equal blocks, the total space required to store $k^m$ matrices of dimension $k^M\times k^M$ is either $O(k^{2M+m})$ or $O(k^{m+M})$. The space required to store matrices $\hat{L}^{\mu}, \mu\in[2:k^M]$ and $\sum\limits_{\mu=1}^{k^M}\hat{L}^{\mu}$ is $O(k^{3M})$. Therefore, the space complexity of Algorithm \ref{7.5} is $O(k^
{3M})$.

If controlled system \eqref{10.3} is used, the time complexity and space complexity for verifying whether the WNEG is controllable are $O(k^{4n})$ and $O(k^{3n})$, respectively.
After aggregation, the time complexity reduces by $k^{4(n-M)}$. The space complexity reduces by $k^{3(n-M)}$. $\hfill\square$

In the following, on the basis of research on controllability, we study the Mayer-type optimal control problem to save the control cost.
Consider system \eqref{3.5} with the initial profile $\delta_{k^M}^{j}$. The Mayer-type optimal control is to find a control sequence such that the cost functional
$$J(u(0),u(1),\cdots,u(K-1);y(0))=r^{T}y(K,u)$$
is minimized, where $r=[r_1,r_2,\cdots,r_{k^M}]^T\in\mathbb{R}_{k^M\times1}$ is a fixed vector, and $K>0$ is a fixed or designed shortest termination time.
References \cite{H23,H24} have provided specific algorithms for the optimal control problems in the cases where $K>0$ is a fixed termination time and $K>0$ is the designed shortest termination time.

When the optimal control sequence of system \eqref{3.5} is found, then we present the optimal control sequence of system \eqref{10.3}, which is equivalent to system \eqref{3.5}.
\begin{theorem} \label{10.8} For system \eqref{3.5}, when $K$ is fixed termination time,  the time complexity of designing the optimal control sequence decreases by $k^{2(n-M)}$ or $k^{3(n-M)}$ compared with system \eqref{10.3}. The space complexity decreases by $k^{2(n-M)+m'-m}$. When $K$ is the designed shortest termination time, the time complexity decreases by $k^{4(n-M)}$ compared with system \eqref{10.3}. The space complexity decreases by $k^{3(n-M)}$.\end{theorem}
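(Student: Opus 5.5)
The argument will be a complexity count carried out in the same style as the proofs of Theorems \ref{10.1}, \ref{10.6} and \ref{10.7}. I will take the algorithms of \cite{H23,H24} for Mayer-type optimal control, run them on the aggregated system \eqref{3.5}, count operations and storage, do the same for the unaggregated system \eqref{10.3}, and take the ratio. In both settings the dimension substitutions are $k^M\mapsto k^n$ and $k^m\mapsto k^{m'}$. Since \eqref{3.5} and \eqref{10.3} are equivalent under the mapping of Proposition \ref{6.1}, an optimal control sequence for \eqref{3.5} with cost vector $r$ maps to an optimal one for \eqref{10.3}. Comparing the two complexities is therefore meaningful.

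\textbf{Case 1: fixed termination time $K$.} I expect the algorithm of \cite{H23} to be a backward dynamic-programming recursion on the value vector. Set $J_K=r^T$. Then for each column $j$ compute $J_{t}(j)=\min_{i\in[1:k^m]} J_{t+1}\mathrm{Col}_j(\bar{L}_i)$, where $\bar{L}=[\bar{L}_1,\cdots,\bar{L}_{k^m}]$ as in Subsection A, and record the minimizing $i$ to recover the control.

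For the time count:
\begin{itemize}
\item Each step is a scan over $\bar{L}$. With $\bar{L}$ stored as a full matrix, forming the products $J_{t+1}\bar{L}_i$ costs $O(k^{2M+m})$ per step.
\item With compressed storage (only the positions of the ones, as in the proof of Theorem \ref{10.1}), a step costs $O(k^{M+m})$.
\item The alternative route of forming matrix powers, as in Lemma \ref{6.7}, costs $O(k^{3M})$.
\end{itemize}
Over $K$ steps, with $K$ fixed and independent of $n$, the dominant term is therefore $O(k^{2M})$ or $O(k^{3M})$, the latter if $\hat{L}$-type products are formed. The same counting for \eqref{10.3} gives $O(k^{2n})$ or $O(k^{3n})$. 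The time reduction is thus $k^{2(n-M)}$ or $k^{3(n-M)}$, which explains the ``or'' in the statement.

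For storage, the dominant object is $\bar{L}$ (or $\bar{F}$), of size $O(k^{2M+m})$ versus $O(k^{2n+m'})$. The value vectors and control tables need only $O(Kk^M)$. This gives the ratio $k^{2(n-M)+m'-m}$.

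\textbf{Case 2: designed shortest termination time.} Here the algorithm must first find the minimal $K$ for which the target is reachable. Following Lemma \ref{6.7}, I would compute $\hat{L}^{K}$ for $K=1,\dots,k^M$ until the reachability condition holds, then run the Case 1 recursion with that $K$. Exactly as in the proof of Theorem \ref{10.7}:
\begin{itemize}
\item Up to $k^M$ matrix powers, each costing $O(k^{3M})$, give time $O(k^{4M})$.
\item Storing the powers and their sum gives space $O(k^{3M})$.
\end{itemize}
These dominate the subsequent dynamic-programming cost. The unaggregated counterparts are $O(k^{4n})$ and $O(k^{3n})$, so the reductions are $k^{4(n-M)}$ and $k^{3(n-M)}$.

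\textbf{Main obstacle.} The hard part will be pinning down the exact operation counts of the algorithms in \cite{H23,H24}, which are not reproduced in the paper. In particular, I need to justify which of $O(k^{2M})$ or $O(k^{3M})$ applies in the fixed-$K$ case, depending on whether the implementation does vector–matrix updates or matrix powers. I also need to check that factors $K$ and $k^m$ are dominated or cancel in the ratio. I would state both variants explicitly, and treat $K$ and $m$ as lower order, mirroring the conventions used in the proof of Theorem \ref{10.7}.
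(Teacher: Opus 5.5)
Your designed-shortest-time case is the paper's argument. There, $k^M$ powers of $\hat{L}$ and their sum cost $O(k^{4M})$ time and $O(k^{3M})$ space, against $O(k^{4n})$ and $O(k^{3n})$ for \eqref{10.3}. Your fixed-$K$ space count, $O(k^{2M+m})$ versus $O(k^{2n+m'})$, also matches. The gap is in where the fixed-$K$ ``or'' comes from. As the paper uses them, Algorithm 6.2 of \cite{H23} and Algorithm III.1 of \cite{H24} are not a backward value recursion. They read $\mathcal{R}_K(\delta_{k^M}^{j})$ off the $j$-th column of $\hat{L}^K$, minimize $r$ over that set in $O(|\mathcal{R}_K(\delta_{k^M}^{j})|)$, and backtrack through the blocks $\bar{L}_i$ to recover the controls at $O(k^{M+m})$ per step. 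The two alternatives are therefore $K=1$ versus $K\geq 2$. For $K=1$ only $\hat{L}$ is needed, at cost $O(k^{2M+m})$ against $O(k^{2n+m'})$. For $K\geq 2$ the power $\hat{L}^K$ must be formed, at cost $O(k^{3M})$ against $O(k^{3n})$, and the backtracking loop $O((K-1)k^{M+m})$ is dominated.

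Your dynamic-programming reading cannot produce the two stated factors on its own terms. A value recursion never needs matrix powers at fixed $K$, so your $k^{3(n-M)}$ branch is an optional detour, not something the algorithm forces. With the compressed storage you yourself invoke, each step costs $O(k^{M+m})$. The totals are then $O(Kk^{M+m})$ versus $O(Kk^{n+m'})$, a reduction of $k^{(n-M)+m'-m}$, which is in general neither alternative in the statement. You reach `$O(k^{2M})$' only by choosing dense storage and then discarding $k^m$. On your stated obstacle: $K$ is dropped in the paper too, since powering is charged $O(k^{3M})$ regardless of $K$. However, $k^m$ and $k^{m'}$ do not cancel. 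The dense counts give the ratio $k^{2(n-M)+m'-m}$, which is at least $k^{2(n-M)}$ because $m'\geq m$. So the $K=1$ time factor should be read as a lower bound, and that same ratio is exactly the stated space reduction.
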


\proof When $K$ is fixed termination time, we refer to Algorithm 6.2 in \cite{H23} or Algorithm \Rmnum{3}.1 in \cite{H24}. The time complexity and space complexity of the algorithm are analyzed as follows. The time complexity of calculating the matrix $\hat{L}$ is $O(k^{2M+m})$. When $K=1$,
traversing the $j$-th column of the matrix $\hat{L}$ to
calculate the reachable set $\mathcal{R}_K(\delta_{k^M}^{j})$ requires $O(k^M)$ operations. Step 2 is to determine the optimal value requires $O(|\mathcal{R}_K(\delta_{k^M}^{j})|)$ operations. Step 3 finds the control corresponding to the optimal value by determining whether an element of the matrix $\bar{L}_i, i\in[1:k^m]$ is non-zero, which requires $O(k^m)$ operations.
When $K\geq2$, the first step of the algorithm needs to compute the matrix $\hat{L}^K$. The time complexity of Step 1 is $O(k^{3M}+k^M)$. Step 2 also takes
$O(|\mathcal{R}_K(\delta_{k^M}^{j})|)$ operations. Since $K>1$, we directly execute the fourth step. Step 4 first requires $O(k^{M+m})$ operations to find the reachable set $\mathcal{R}_1(y(K))$, then verifying whether $y(K-2)$ reaches $\mathcal{R}_1(y(K))$ requires $O(|\mathcal{R}_1(y(K))|k^m)$ operations. The same steps are performed $K-1$ times to find an optimal control sequence, the time complexity of the loop is $O(K-1(k^{M+m}))$.
When $K=1$ and $K\geq2$, the total time complexity of this algorithm are $O(k^{2M+m})$ and $O(k^{3M})$, respectively. The space complexity is also no more than $O(k^{2M+m})$. Without aggregation, when $K=1$ and $K\geq2$ the time complexity based on this algorithm are $O(k^{2n+m'})$ and $O(k^{3n})$, respectively. The space complexity does not exceed $O(k^{2n+m'})$.

For system \eqref{3.5}, the time complexity of finding the optimal control sequence decreases by $k^{2(n-M)}$ and  $k^{3(n-M)}$ when $K=1$ and $K\geq2$, respectively. The space complexity reduces by $k^{2(n-M)+m'-m}$ in both cases.

When $K$ is the designed shortest termination time, we consider Algorithm 6.3 in \cite{H23} or Algorithm \Rmnum{3}.2 in \cite{H24}.
Unlike $K$ being a fixed termination time, the first step of the algorithm requires calculating all-step reachable sets of profile $\delta_{k^{M}}^{j}$.
First, it is required to perform $k^M$ matrix power operations, then sum $k^M$ matrices, and finally traverse the $j$-th column of the matrix. The total time complexity is $O(k^{4M})$. Since other steps are similar to those with $K$ as the fixed termination time, and their time complexities are all smaller than $O(k^{4M})$, then the time complexity of the algorithm is thus
$O(k^{4M})$. The space complexity is $O(k^{3M})$.  For system \eqref{10.3}, the time complexity of finding the optimal control sequence is $O(k^{4n})$ when $K$ is the designed shortest termination time. The space complexity is $O(k^{3n})$. Thus, the time complexity and space complexity decrease by $k^{4(n-M)}$ and $k^{3(n-M)}$ respectively at this point. $\hfill\square$

We propose an example to illustrate how to study the controllability and optimal control of the WNEG via aggregation.
Consider system \eqref{7.7}.
Split $\bar{L}$ into $\bar{L}=[\bar{L}_1,\bar{L}_2]$ and compute $\hat{L}=\bar{L}_1+\bar{L}_2$.
Let $y(0)=\delta_{32}^5$, we have $[\hat{L}]_{35}=1>0$ and $[\hat{L}^4]_{325}=16>0$. Then, system \eqref{3.5} is controllable from $\delta_{32}^5$ to $y(1)=\delta_{32}^3$ and $y(4)=\delta_{32}^{32}$.
Since $\delta_{32}^{32}$ is a fixed point, system \eqref{3.5} is neither controllable nor controllable at $\delta_{32}^5$. That is, the WNEG is controllable from $\delta_{2^{30}}^{201458243}$ to $y(1)=\delta_{2^{30}}^{55312386}$ and $y(4)=\delta_{2^{30}}^{2^{30}}$.
The WNEG is neither controllable nor controllable at $\delta_{2^{30}}^{201458243}$.

Assume that the cost functional is $J(u(0),u(1),\cdots,u(K-1);y(0))=r^{T}y(K,u)$,
where $r=[1,2,3,5,2,4,5,3,4,6,4,3,2,1,4,6,7,6,4,6,3,4,6,4,3,
2,\\1,4,6,7,6,2]$.
We are committed to finding a control sequence to minimize the cost functional with $y(0)=\delta_{32}^5$ and $K=3$.
We calculate $\mathcal{R}_3(\delta_{32}^5)=\{\delta_{32}^{8},\delta_{32}^{16},\delta_{32}^{24},\delta_{32}^{32}\}$.
Minimize the cost functional under the reachable set constraints and obtain that $r^*=\mathrm{min}\{2,3,4,6\}=2$ and the corresponding $y(K)=\delta_{32}^{32}$. Since $[\bar{L}_1]_{35}=1,[\bar{L}_2]_{103}=1$ and $[\bar{L}_2]_{3210}=1$, then we find a control sequence $u(0)=\delta_2^1,u(1)=\delta_2^2 $ and $u(2)=\delta_2^2$ to minimize the cost functional at time step $K=3$. For the original controlled system, when $K=3$, the control cost of
profile $\delta_{2^{30}}^{201458243}$
reaching profile $\delta_{2^{30}}^{2^{30}}$ under the control sequence $u(0)=\delta_{32}^1,u(1)=\delta_{32}^{32} $ and $u(2)=\delta_{32}^{32}$ is minimized.
{\rem Based on Theorem \ref{10.7}, the time complexity of verifying the controllability of the WNEG decreases by $2^{50}$ or $2^{75}$, and the space complexity decreases by $2^{54}$. Based on Theorem \ref{10.8}, the time complexity of finding optimal control sequence of the WNEG decreases by $2^{100}$, and the space complexity decreases by $2^{75}$.}
\section{Conclusion}
Compared with mean-field approximations and Monte Carlo simulations, STP enables a rigorous analysis of dynamic behavior in networked evolutionary games with any number of players and any size of strategy set. However, its computational complexity is high for games with  either a large number of players or a large strategy set.
We propose an approach based on backward equivalence to aggregate nodes of WNEGs. This approach effectively reduces the computational complexity.
Firstly, a necessary and sufficient condition is proposed to tell whether
the WNEG on a large scale network is reduced to the WNEG on an aggregated network.
Secondly, we give the criteria to determine whether the \emph{controlled} WNEG is reduced to an equivalent \emph{controlled} system with low dimension.
Finally, the effectiveness of the proposed aggregation method is verified.

Our work solves the curse of dimensionality arising from STP, although the aggregation has been applied to chemical reaction networks \cite{L25} and Boolean networks \cite{G27}. Our results are more general. The condition of backward aggregation in our paper is necessary and sufficient, not merely sufficient \cite{G28}.
Compared with the replicator equation in \cite{G28}, the myopic best response adjustment rule takes into account less information. That's why we choose the myopic best response adjustment rule.
In fact, the aggregation proposed in this paper is robust for all the updating rules which are determined by one-step neighbor. Therefore, it is also valid for aspiration dynamics \cite{J32,M33,Y34}.

Significant attention has focused on ensuring the evolutionary dynamics of WNEGs converge to ideal target profiles.
However, as global natural convergence of WNEGs cannot always be  guaranteed, developing control mechanisms becomes essential.
Compared with the method of regarding some players within the player set as control players \cite{D17}, the introduction of external control players in this paper leads to higher system dimension. However, the setting that there is no interaction between control players, on the basis of achieving control objectives, also simplifies the analysis to a certain extent. This advantage is obvious when compared to the input network where controls are logical variables satisfying certain logical rule \cite{D47}.
Our aggregation method has a broad application to control problems ranging from strategy consensus, strategy optimization, controllability, to optimal control whereas the aggregation method in \cite{H18} is designed for the strategy consensus problem only. Beside, this aggregation method can also be used for the system obtained via Koopman operator \cite{B37,M38}. Similar to STP, the Koopman operator transforms a nonlinear dynamical system into a linear one in a high-dimensional space. Therefore, the dimension of the resulting linear system is computationally heavy, our aggregation method can be insightful for diluting the curse of dimensionality arising from Koopman, too.

In spite of its universality of the aggregation method, there are limitations.
If players update their strategies according to the unconditional imitation rule \cite{M1,G35,Y36}, this aggregation method is no longer applicable because the neighbors' information of focal individual's neighbors is necessary.

To sum up, our results identify the type of networks that can be aggregated by backward equivalence. This sheds new light on the control of evolutionary games on large-scale networks.

\section*{Appendix}
\subsection{proof}
Proof of Theorem \ref{3.3}: (Sufficiency) Given any initial profile $x(0)=(x_1(0),x_2(0),\cdots,x_n(0))\in S'$, for arbitrary $i,j\in H, H\in\mathcal{H}$, the weighted average payoffs of players $i$ and $j$ is separately express as
\begin{align}\label{1.3}
p_i(x_i,x_{-i}(0)) =&\sum\limits_{p\in\mathcal{N}_i}\frac{a_{pi}}{\sum\limits_{l\in N}a_{li}}V_r^T(C)x_ix_p(0)\nonumber\\
=&\sum\limits_{H'\in\mathcal{H}}\frac{\sum\limits_{p\in H'}a_{pi}}{\sum\limits_{l\in N}a_{li}}V_r^T(C)x_ix_p(0)
\end{align}
and
\begin{align}\label{1.4}
p_j(x_j,x_{-j}(0))=&\sum\limits_{p\in\mathcal{N}_j}\frac{a_{pj}}{\sum\limits_{l\in N}a_{lj}}V_r^T(C)x_jx_p(0)\nonumber\\
=&\sum\limits_{H'\in\mathcal{H}}\frac{\sum\limits_{p\in H'}a_{pj}}{\sum\limits_{l\in N}a_{lj}}V_r^T(C)x_jx_p(0).
\end{align}

From \eqref{1.3} and \eqref{1.4}, whenever $x_i=x_j$ one gets $p_i(x_i,x_{-i}(0))=p_j(x_j,x_{-j}(0))$.
Then the strategies that enable player $i$ and player $j$ to obtain the maximum payoff are also the same. According to the myopic best response adjustment rule, we derive that $x_i(1)=x_j(1)$ holds for arbitrary $H\in\mathcal{H}$ and $i,j\in H$.

For any profile $x(1)=(x_1(1),x_2(1),\cdots,x_n(1))\in S'$, we also have $p_i(x_i,x_{-i}(1))=p_j(x_j,x_{-j}(1))$ when $x_i=x_j$. Furthermore, it is inferred that $x_i(2)=x_j(2)$ holds for arbitrary $H\in\mathcal{H}$ and $i,j\in H$.

Repeating the above process, when $t\geq2$, for arbitrary $H\in\mathcal{H}$ and $i,j\in H$, it holds that $p_i(x_i,x_{-i}(t))=p_j(x_j,x_{-j}(t))$ for $x_i=x_j$, then there must be $x_i(t+1)=x_j(t+1)$.

In summary, when the two conditions of Theorem \ref{3.3} are satisfied, we have $x_i(t)=x_j(t), t\geq0$ for arbitrary $H\in\mathcal{H}$ and $i,j\in H$. By Definition \ref{9.4}, it follows that $\mathcal{H}$ is a backward equivalence.

(Necessity) Since $\mathcal{H}$ is a backward equivalence, then $x_i(t)=x_j(t), t\geq0$ holds for arbitrary $H\in\mathcal{H}$ and $i,j\in H$. It follows that the first condition of the theorem is valid.

Consider arbitrary $H\in\mathcal{H}$ and $i,j\in H$. For any profile $x(t)=(x_1(t),x_2(t),\cdots,x_n(t))=s^{i_\theta}\in S', \theta=1,2,\cdots,k^M$, we compute the weighted average payoff of player $\rho, \rho=i,j$ as
\begin{align*}
p_\rho(x_\rho,x_{-\rho}(t))=&\sum\limits_{H'\in\mathcal{H}}\frac{\sum\limits_{p\in H'}a_{p\rho}}{\sum\limits_{l\in N}a_{l\rho}}V_r^T(C)x_\rho x_p(t)\nonumber\\
=&\sum\limits_{\alpha=1}^{k}\sum\limits_{H'\in S_\theta^\alpha}\frac{\sum\limits_{p\in H'}a_{p\rho}}{\sum\limits_{l\in N}a_{l\rho}}V_r^T(C)x_\rho x_p(t).
\end{align*}

First, we present all cases where player $i$ and player $j$ choose different strategies. There are $k$ possible strategies for player $i$ to choose, while player $j$ will have $k-1$ strategies available to him if he chooses a different strategy from player $i$.
Therefore, there are a total of $k^{2}-k$ situations in which players $i$ and $j$ choose different strategies.

If player $i$ chooses strategy $x_i^{\xi}, \forall\xi\in[1:k]$ at time $t+1$, then
\begin{equation}\label{2.1}
\left\{
  \begin{array}{ll}
    p_i(x_i^\xi,x_{-i}(t))- p_i(x_i^1,x_{-i}(t))>0,  \\
 \quad\vdots\\
p_i(x_i^\xi,x_{-i}(t))- p_i(x_i^{\xi-1},x_{-i}(t))>0 \\
p_i(x_i^\xi,x_{-i}(t))- p_i(x_i^{\xi+1},x_{-i}(t))\geq0 \\
\quad\vdots\\
p_i(x_i^\xi,x_{-i}(t))- p_i(x_i^{k},x_{-i}(t))\geq0.\\
  \end{array}
\right.
\end{equation}

Since the strategy chosen by player $j$ is different from that of player $i$, then there must exist at least one strategy $x_j^\beta, $ such that \begin{equation}\label{2.2}
\left\{
  \begin{array}{ll}
  p_j(x_j^\xi,x_{-j}(t))- p_j(x_j^\beta,x_{-j}(t))\leq0, & \beta\in[1:\xi-1] \\
    p_j(x_j^\xi,x_{-j}(t))- p_j(x_j^\beta,x_{-j}(t))<0, & \beta\in[\xi+1:k]
  \end{array}
\right.
\end{equation}

When $\beta\in[1:\xi-1]$, combining \eqref{2.1} with the first inequality of \eqref{2.2} leads to:
\begin{equation}\label{4.1}
\left\{
  \begin{array}{ll}
    \sum\limits_{\alpha=1}^{k}\sum\limits_{H'\in S_\theta^\alpha}\frac{\sum\limits_{p\in H'}a_{pi}}{\sum\limits_{l\in N}a_{li}}(c_{\xi\alpha}-c_{\omega\alpha})>0,\omega\in[1:\xi-1] \\
\sum\limits_{\alpha=1}^{k}\sum\limits_{H'\in S_\theta^\alpha}\frac{\sum\limits_{p\in H'}a_{pi}}{\sum\limits_{l\in N}a_{li}}(c_{\xi\alpha}-c_{\omega\alpha})\geq0,\omega\in[\xi+1:k]\\
\sum\limits_{\alpha=1}^{k}\sum\limits_{H'\in S_\theta^\alpha}\frac{\sum\limits_{p\in H'}a_{pj}}{\sum\limits_{l\in N}a_{lj}}(c_{\xi\alpha}-c_{\beta\alpha})\leq0.
  \end{array}
\right.
\end{equation}
When $\beta\in[\xi+1,k]$, combining \eqref{2.1} and the second inequality of \eqref{2.2} yields:
\begin{equation}\label{4.2}
\left\{
  \begin{array}{ll}
    \sum\limits_{\alpha=1}^{k}\sum\limits_{H'\in S_\theta^\alpha}\frac{\sum\limits_{p\in H'}a_{pi}}{\sum\limits_{l\in N}a_{li}}(c_{\xi\alpha}-c_{\omega\alpha})>0,\omega\in[1:\xi-1] \\
\sum\limits_{\alpha=1}^{k}\sum\limits_{H'\in S_\theta^\alpha}\frac{\sum\limits_{p\in H'}a_{pi}}{\sum\limits_{l\in N}a_{li}}(c_{\xi\alpha}-c_{\omega\alpha})\geq0,\omega\in[\xi+1:k]\\
\sum\limits_{\alpha=1}^{k}\sum\limits_{H'\in S_\theta^\alpha}\frac{\sum\limits_{p\in H'}a_{pj}}{\sum\limits_{l\in N}a_{lj}}(c_{\xi\alpha}-c_{\beta\alpha})<0.
  \end{array}
\right.
\end{equation}

Keep the remaining inequalities unchanged and combine the $\beta$-th inequality with the last one, solving \eqref{4.1} and \eqref{4.2} for
\begin{equation}\label{2.3}
\left\{
  \begin{array}{ll}
    \sum\limits_{\alpha=1}^{k}\sum\limits_{H'\in S_\theta^\alpha}\frac{\sum\limits_{p\in H'}a_{pi}}{\sum\limits_{l\in N}a_{li}}(c_{\xi\alpha}-c_{\omega\alpha})>0, \omega\in[1:\xi-1]\setminus\{\beta\}\\
    \sum\limits_{\alpha=1}^{k}\sum\limits_{H'\in S_\theta^\alpha}\frac{\sum\limits_{p\in H'}a_{pi}}{\sum\limits_{l\in N}a_{li}}(c_{\xi\alpha}-c_{\omega\alpha})\geq0, \omega\in[\xi+1,k]\\
c_{\xi h_\theta}-c_{\beta h_\theta}>\sum\limits_{\alpha\neq h_\theta}\frac{\sum\limits_{H'\in S_\theta^\alpha}\sum\limits_{p\in H'}a_{pi}}{\sum\limits_{H'\in S_\theta^{h_\theta}}\sum\limits_{p\in H'}a_{pi}}(c_{\xi\alpha}-c_{\beta\alpha})\\
c_{\xi h_\theta}-c_{\beta h_\theta}\leq\sum\limits_{\alpha\neq h_\theta}\frac{\sum\limits_{H'\in S_\theta^\alpha}\sum\limits_{p\in H'}a_{pj}}{\sum\limits_{H'\in S_\theta^{h_\theta}}\sum\limits_{p\in H'}a_{pj}}(c_{\xi\alpha}-c_{\beta\alpha})
  \end{array}
\right.
\end{equation}
and
\begin{equation}\label{4.5}
\left\{
  \begin{array}{ll}
    \sum\limits_{\alpha=1}^{k}\sum\limits_{H'\in S_\theta^\alpha}\frac{\sum\limits_{p\in H'}a_{pi}}{\sum\limits_{l\in N}a_{li}}(c_{\xi\alpha}-c_{1\alpha})>0, \omega\in[1:\xi-1]\\
    \sum\limits_{\alpha=1}^{k}\sum\limits_{H'\in S_\theta^\alpha}\frac{\sum\limits_{p\in H'}a_{pi}}{\sum\limits_{l\in N}a_{li}}(c_{\xi\alpha}-c_{\omega\alpha})\geq0, \omega\in[\xi+1,k]\setminus\{\beta\}\\
c_{\xi h_\theta}-c_{\beta h_\theta}\geq\sum\limits_{\alpha\neq h_\theta}\frac{\sum\limits_{H'\in S_\theta^\alpha}\sum\limits_{p\in H'}a_{pi}}{\sum\limits_{H'\in S_\theta^{h_\theta}}\sum\limits_{p\in H'}a_{pi}}(c_{\xi\alpha}-c_{\beta\alpha})\\
c_{\xi h_\theta}-c_{\beta h_\theta}<\sum\limits_{\alpha\neq h_\theta}\frac{\sum\limits_{H'\in S_\theta^\alpha}\sum\limits_{p\in H'}a_{pj}}{\sum\limits_{H'\in S_\theta^{h_\theta}}\sum\limits_{p\in H'}a_{pj}}(c_{\xi\alpha}-c_{\beta\alpha}).
  \end{array}
\right.
\end{equation}
Consider the opposite case to the above, that is, for player $i$, choosing strategy $x_i^\beta$ can obtain the maximum payoff, and for player $j$, choosing strategy $x_i^\xi$ yields a higher payoff than choosing strategy $x_i^\beta$. The sets of inequalities in this case are:
\begin{equation}\label{4.3}
\left\{
  \begin{array}{ll}
    \sum\limits_{\alpha=1}^{k}\sum\limits_{H'\in S_\theta^\alpha}\frac{\sum\limits_{p\in H'}a_{pi}}{\sum\limits_{l\in N}a_{li}}(c_{\beta\alpha}-c_{\omega\alpha})>0, \omega\in[1:\beta-1]  \\
    \sum\limits_{\alpha=1}^{k}\sum\limits_{H'\in S_\theta^\alpha}\frac{\sum\limits_{p\in H'}a_{pi}}{\sum\limits_{l\in N}a_{li}}(c_{\beta\alpha}-c_{\xi+1\alpha})\geq0, \omega\in[\beta+1:k]\setminus\{\xi\}\\
c_{\xi h_\theta}-c_{\beta h_\theta}>\sum\limits_{\alpha\neq h_\theta}\frac{\sum\limits_{H'\in S_\theta^\alpha}\sum\limits_{p\in H'}a_{pj}}{\sum\limits_{H'\in S_\theta^{h_\theta}}\sum\limits_{p\in H'}a_{pj}}(c_{\xi\alpha}-c_{\beta\alpha})\\
c_{\xi h_\theta}-c_{\beta h_\theta}\leq\sum\limits_{\alpha\neq h_\theta}\frac{\sum\limits_{H'\in S_\theta^\alpha}\sum\limits_{p\in H'}a_{pi}}{\sum\limits_{H'\in S_\theta^{h_\theta}}\sum\limits_{p\in H'}a_{pi}}(c_{\xi\alpha}-c_{\beta\alpha}).
  \end{array}
\right.
\end{equation}
and
\begin{equation}\label{4.4}
\left\{
  \begin{array}{ll}
    \sum\limits_{\alpha=1}^{k}\sum\limits_{H'\in S_\theta^\alpha}\frac{\sum\limits_{p\in H'}a_{pi}}{\sum\limits_{l\in N}a_{li}}(c_{\beta\alpha}-c_{\omega\alpha})>0, \omega\in[1:\beta-1]\setminus\{\xi\}  \\
    \sum\limits_{\alpha=1}^{k}\sum\limits_{H'\in S_\theta^\alpha}\frac{\sum\limits_{p\in H'}a_{pi}}{\sum\limits_{l\in N}a_{li}}(c_{\beta\alpha}-c_{\xi+1\alpha})\geq0, \omega\in[\beta+1:k]\\
c_{\xi h_\theta}-c_{\beta h_\theta}\geq\sum\limits_{\alpha\neq h_\theta}\frac{\sum\limits_{H'\in S_\theta^\alpha}\sum\limits_{p\in H'}a_{pj}}{\sum\limits_{H'\in S_\theta^{h_\theta}}\sum\limits_{p\in H'}a_{pj}}(c_{\xi\alpha}-c_{\beta\alpha})\\
c_{\xi h_\theta}-c_{\beta h_\theta}<\sum\limits_{\alpha\neq h_\theta}\frac{\sum\limits_{H'\in S_\theta^\alpha}\sum\limits_{p\in H'}a_{pi}}{\sum\limits_{H'\in S_\theta^{h_\theta}}\sum\limits_{p\in H'}a_{pi}}(c_{\xi\alpha}-c_{\beta\alpha}).
  \end{array}
\right.
\end{equation}

It can be seen that the last two lines of \eqref{4.3} and \eqref{4.4} are mutually exclusive with the last two lines of \eqref{2.3} and \eqref{4.5} respectively. Thus, we can find a suitable payoff matrix $C$ to make \eqref{2.3} and \eqref{4.5} or \eqref{4.3} and \eqref{4.4} hold as long as $\sum\limits_{\alpha\neq h_\theta}\frac{\sum\limits_{H'\in S_\theta^\alpha}\sum\limits_{p\in H'}a_{pj}}{\sum\limits_{H'\in S_\theta^{h_\theta}}\sum\limits_{p\in H'}a_{pj}}(c_{\xi\alpha}-c_{\beta\alpha})\neq\sum\limits_{\alpha\neq h_\theta}\frac{\sum\limits_{H'\in S_\theta^\alpha}\sum\limits_{p\in H'}a_{pi}}{\sum\limits_{H'\in S_\theta^{h_\theta}}\sum\limits_{p\in H'}a_{pi}}(c_{\xi\alpha}-c_{\beta\alpha})$.
When $\sum\limits_{\alpha\neq h_\theta}(\frac{\sum\limits_{H'\in S_\theta^\alpha}\sum\limits_{p\in H'}a_{pj}}{\sum\limits_{H'\in S_\theta^{h_\theta}}\sum\limits_{p\in H'}a_{pj}}-\frac{\sum\limits_{H'\in S_\theta^\alpha}\sum\limits_{p\in H'}a_{pi}}{\sum\limits_{H'\in S_\theta^{h_\theta}}\sum\limits_{p\in H'}a_{pi}})(c_{\xi\alpha}-c_{\beta\alpha})>0$, if the payoff matrix $C$ satisfies $$\left\{
                  \begin{array}{ll}
                    c_{\xi\alpha}-c_{\omega\alpha}>0,\;\alpha\in[1:k],\omega\in[1:\xi-1]\setminus\{\beta\} \\
c_{\xi\alpha}-c_{\omega\alpha}\geq0,\;\alpha\in[1:k],\omega\in[\xi+1,k] \\
                   c_{\xi h_\theta}-c_{\beta h_\theta}>\sum\limits_{\alpha\neq h_\theta}\frac{\sum\limits_{H'\in S_\theta^\alpha}\sum\limits_{p\in H'}a_{pi}}{\sum\limits_{H'\in S_\theta^{h_\theta}}\sum\limits_{p\in H'}a_{pi}}(c_{\xi\alpha}-c_{\beta\alpha})\\
c_{\xi h_\theta}-c_{\beta h_\theta}\leq\sum\limits_{\alpha\neq h_\theta}\frac{\sum\limits_{H'\in S_\theta^\alpha}\sum\limits_{p\in H'}a_{pj}}{\sum\limits_{H'\in S_\theta^{h_\theta}}\sum\limits_{p\in H'}a_{pj}}(c_{\xi\alpha}-c_{\beta\alpha}),
                  \end{array}
                \right.
$$
or
$$\left\{
                  \begin{array}{ll}
                    c_{\xi\alpha}-c_{\omega\alpha}>0,\;\alpha\in[1:k],\omega\in[1:\xi-1] \\
c_{\xi\alpha}-c_{\omega\alpha}\geq0,\;\alpha\in[1:k],\omega\in[\xi+1,k]\setminus\{\beta\}  \\
                   c_{\xi h_\theta}-c_{\beta h_\theta}\geq\sum\limits_{\alpha\neq h_\theta}\frac{\sum\limits_{H'\in S_\theta^\alpha}\sum\limits_{p\in H'}a_{pi}}{\sum\limits_{H'\in S_\theta^{h_\theta}}\sum\limits_{p\in H'}a_{pi}}(c_{\xi\alpha}-c_{\beta\alpha})\\
c_{\xi h_\theta}-c_{\beta h_\theta}<\sum\limits_{\alpha\neq h_\theta}\frac{\sum\limits_{H'\in S_\theta^\alpha}\sum\limits_{p\in H'}a_{pj}}{\sum\limits_{H'\in S_\theta^{h_\theta}}\sum\limits_{p\in H'}a_{pj}}(c_{\xi\alpha}-c_{\beta\alpha}),
                  \end{array}
                \right.
$$
then a scenario will occur where player $i$ chooses strategy $x_i^\xi$ while player $j$ does not choose strategy $x_i^\xi$ at the next time.

When $\sum\limits_{\alpha\neq h_\theta}(\frac{\sum\limits_{H'\in S_\theta^\alpha}\sum\limits_{p\in H'}a_{pj}}{\sum\limits_{H'\in S_\theta^{h_\theta}}\sum\limits_{p\in H'}a_{pj}}-\frac{\sum\limits_{H'\in S_\theta^\alpha}\sum\limits_{p\in H'}a_{pi}}{\sum\limits_{H'\in S_\theta^{h_\theta}}\sum\limits_{p\in H'}a_{pi}})(c_{\xi\alpha}-c_{\beta\alpha})<0$, if the payoff matrix $C$ meets
$$\left\{
                  \begin{array}{ll}
                    c_{\beta\alpha}-c_{\omega\alpha}>0,\;\alpha\in[1:n],\omega\in[1:\beta-1] \\
c_{\beta\alpha}-c_{\omega\alpha}\geq0,\;\alpha\in[1:n],\omega\in[\beta+1:k]\setminus\{\xi\} \\
                   c_{\xi h_\theta}-c_{\beta h_\theta}>\sum\limits_{\alpha\neq h_\theta}\frac{\sum\limits_{H'\in S_\theta^\alpha}\sum\limits_{p\in H'}a_{pj}}{\sum\limits_{H'\in S_\theta^{h_\theta}}\sum\limits_{p\in H'}a_{pj}}(c_{\xi\alpha}-c_{\beta\alpha})\\
c_{\xi h_\theta}-c_{\beta h_\theta}\leq\sum\limits_{\alpha\neq h_\theta}\frac{\sum\limits_{H'\in S_\theta^\alpha}\sum\limits_{p\in H'}a_{pi}}{\sum\limits_{H'\in S_\theta^{h_\theta}}\sum\limits_{p\in H'}a_{pi}}(c_{\xi\alpha}-c_{\beta\alpha}),
                  \end{array}
                \right.
$$
or
$$\left\{
                  \begin{array}{ll}
                    c_{\beta\alpha}-c_{\omega\alpha}>0,\;\alpha\in[1:n],\omega\in[1:\beta-1]\setminus\{\xi\} \\
c_{\beta\alpha}-c_{\omega\alpha}\geq0,\;\alpha\in[1:n],\omega\in[\beta+1:k]\\
                   c_{\xi h_\theta}-c_{\beta h_\theta}\geq\sum\limits_{\alpha\neq h_\theta}\frac{\sum\limits_{H'\in S_\theta^\alpha}\sum\limits_{p\in H'}a_{pj}}{\sum\limits_{H'\in S_\theta^{h_\theta}}\sum\limits_{p\in H'}a_{pj}}(c_{\xi\alpha}-c_{\beta\alpha})\\
c_{\xi h_\theta}-c_{\beta h_\theta}<\sum\limits_{\alpha\neq h_\theta}\frac{\sum\limits_{H'\in S_\theta^\alpha}\sum\limits_{p\in H'}a_{pi}}{\sum\limits_{H'\in S_\theta^{h_\theta}}\sum\limits_{p\in H'}a_{pi}}(c_{\xi\alpha}-c_{\beta\alpha}),
                  \end{array}
                \right.
$$
then player $i$ will choose strategy $x_i^\beta$ at the next time, but player $j$ will not.



Therefore, in order to prevent the situation where players $i$ and $j$ choose different strategies in the next time under the profile $s^{i_\theta}$, the only way is to make \begin{equation}\label{3.8}\sum\limits_{\alpha\neq h_\theta}(\frac{\sum\limits_{H'\in S_\theta^\alpha}\sum\limits_{p\in H'}a_{pj}}{\sum\limits_{H'\in S_\theta^{h_\theta}}\sum\limits_{p\in H'}a_{pj}}-\frac{\sum\limits_{H'\in S_\theta^\alpha}\sum\limits_{p\in H'}a_{pi}}{\sum\limits_{H'\in S_\theta^{h_\theta}}\sum\limits_{p\in H'}a_{pi}})(c_{\xi\alpha}-c_{\beta\alpha})=0.\end{equation}

There are $k^2-k$ possible combinations of $x_i^\xi, \xi\in[1:k]$ and $x_i^\beta, \beta\in[1:n]\setminus\{\xi\}$, from which $\frac{k^2-k}{2}$ equations similar to \eqref{3.8} can be obtained. That is, for any $\omega\in[1:k-1]$,
\begin{equation}\label{2.5}
    \sum\limits_{\alpha\neq h_\theta}(\frac{\sum\limits_{H'\in S_\theta^\alpha}\sum\limits_{p\in H'}a_{pj}}{\sum\limits_{H'\in S_\theta^{h_\theta}}\sum\limits_{p\in H'}a_{pj}}-\frac{\sum\limits_{H'\in S_\theta^\alpha}\sum\limits_{p\in H'}a_{pi}}{\sum\limits_{H'\in S_\theta^{h_\theta}}\sum\limits_{p\in H'}a_{pi}})(c_{\omega+1\alpha}-c_{\omega\alpha})=0.  \\
\end{equation}
It is easy to obtain that \eqref{2.5} holds for all payoff matrices if and only if $$\frac{\sum\limits_{H'\in S_\theta^\alpha}\sum\limits_{p\in H'}a_{pj}}{\sum\limits_{H'\in S_\theta^{h_\theta}}\sum\limits_{p\in H'}a_{pj}}=\frac{\sum\limits_{H'\in S_\theta^\alpha}\sum\limits_{p\in H'}a_{pi}}{\sum\limits_{H'\in S_\theta^{h_\theta}}\sum\limits_{p\in H'}a_{pi}}, \forall\alpha\in[2:k].$$
Then there must exist a real number $q_\theta$ such that
\begin{equation}\label{5.1}\sum\limits_{H'\in S_\theta^\alpha}\sum\limits_{p\in H'}a_{pj}=q_\theta(\sum\limits_{H'\in S_\theta^\alpha}\sum\limits_{p\in H'}a_{pi}), \forall \alpha\in[1:k].\end{equation}

For any profile in the set $S'$, we derive an equation similar to \eqref{5.1}. Combining these $k^M$ equations together, after a simple calculation, we can get
$$\sum\limits_{p\in H_\kappa}a_{pj}=q_\theta\sum\limits_{p\in H_\kappa}a_{pi}, \forall \kappa\in[1:M].$$
Thus, for arbitrary $H,H'\in\mathcal{H}, i,j\in H$ and $\kappa\in[1:M]$, we have $$\frac{q_\theta\sum\limits_{p\in H_\kappa}a_{pi}}{q_\theta\sum\limits_{\kappa=1}^{M}\sum\limits_{p\in H_\kappa}a_{pi}}=\frac{\sum\limits_{p\in H'}a_{pi}}{\sum\limits_{l\in N}a_{li}}=\frac{\sum\limits_{p\in H'}a_{pj}}{\sum\limits_{l\in N}a_{lj}}.$$
In conclusion, the necessity is proved.$\hfill\square$

The proof of Theorem \ref{6.2}: (Sufficiency) From the construction $\hat{L}$, it follows that if $[\hat{L}]_{\rho\epsilon}>0$, there must be at least one control $\delta_{k^m}^{i}$ such that profile $\delta_{k^M}^\epsilon$ can reach profile $\delta_{k^M}^\rho$. Since $\Xi^T\mathrm{Col}_\epsilon(\hat{L}^{t_1})
=[\hat{L}^{t_1}]_{\lambda_1\epsilon}+[\hat{L}^{t_1}]_{\lambda_2\epsilon}+
\cdots+[\hat{L}^{t_1}]_{\lambda_q\epsilon}$, if $\Xi^T\mathrm{Col}_\epsilon(\hat{L}^{t_1})>0, \forall \;\epsilon\in[1:k^M]$ holds, profile $\delta_{k^M}^\epsilon\in\Delta_{k^M}$ can reach $I_c(\bar{\Theta})$ under some control $\delta_{k^{m}}^{i}$. Then there must exist $q$ non-negative real numbers $b_1^\epsilon,b_2^\epsilon,\cdots,b_{q}^\epsilon$ such that $\mathrm{Col}_\epsilon(\hat{L}^{t_1})=b_1^\epsilon\delta_{k^M}^{\lambda_1}+b_2^\epsilon\delta_{k^M}^{\lambda_2}
+\cdots+b_{q}^\epsilon\delta_{k^M}^{\lambda_{q}}$. Since $I_c(\bar{\Theta})$ is a control invariant set, one has
\begin{align*}
&\Xi^T\mathrm{Col}_\epsilon(\hat{L}^{t_1+1})\\
&=\Xi^T\cdot\hat{L}\cdot\mathrm{Col}_\epsilon(\hat{L}^{t_1})\\
&=\Xi^T(b_1^\epsilon\hat{L}\delta_{k^M}^{\lambda_1}+b_2^\epsilon\hat{L}\delta_{k^{M}}^{\lambda_2}
+\cdots+b_{q}^\epsilon\hat{L}\delta_{k^M}^{\lambda_q})\\
&=\Xi^T(b_1^\epsilon\mathrm{Col}_{\lambda_1}(\hat{L})+b_2^\epsilon\mathrm{Col}_{\lambda_2}(\hat{L})+\cdots+
b_{q}^\epsilon\mathrm{Col}_{\lambda_q}(\hat{L}))\\
&=b_1^\epsilon\Xi^T\mathrm{Col}_{\lambda_1}(\hat{L})+b_2^\epsilon\Xi^T\mathrm{Col}_{\lambda_2}(\hat{L})+\cdots\\&\quad+
b_{q}^\epsilon\Xi^T\mathrm{Col}_{\lambda_q}(\hat{L})>0
\end{align*}
Obviously, $\Xi^T\mathrm{Col}_\epsilon(\hat{L}^{t_1+t})>0$ for all positive integer $t$, which shows that for any profile $\delta_{k^M}^\epsilon\in\Delta_{k^{M}}$ after $t_1+t$ steps still evolves to $I_c(\bar{\Theta})$ under some control sequence. So system \eqref{3.5} converges to $I_c(\bar{\Theta})$. That is, the WNEG with algebraic form \eqref{3.5} reaches strategy consensus.

(Necessity) We prove it by contradiction. If $\hat{\Theta}=\emptyset$, it means that all super players do not have the same strategy.
Assuming $I_c(\bar{\Theta})=\emptyset$, there must not exist an integer $t_1$ such that $y(t_1)\in \bar{\Theta}$. These contradict the fact that the WNEG with algebraic
form \eqref{3.5} achieves strategy consensus.
Suppose the third condition of the Theorem \ref{6.2} does not hold, there must exist a profile $\delta_{k^M}^{\xi}$ that cannot $(\alpha-q)$-step reach $I_c(\bar{\Theta})$ under any control. This contradicts the fact that the system \eqref{3.5} converges to the set $I_c(\bar{\Theta})$.$\hfill\square$

\subsection{Algorithm}
\begin{algorithm}
         \caption{Verify the strategy consensus of the WNEG with algebraic form \eqref{3.5} }\label{7.6}
         \begin{algorithmic}[1]
                    \Require $\bar{\Theta}=\{\delta_{k^M}^{\theta_1},\cdots,\delta_{k^M}^{\theta_a}\}, \hat{S}=\{\delta_{k^M}^{\beta_1},\cdots,\delta_{k^M}^{\beta_\alpha}\},\bar{L}$.
\State Split $\bar{L}$ into $k^m$ equal blocks as $\bar{L}=[\bar{L}_1,\bar{L}_2,\cdots,\bar{L}_{k^m}]$
and Compute $\hat{L}=\sum\limits_{\tau=1}^{k^m}\bar{L}_\tau$
\State  $\mathbf{Initialization:}$ $t\leftarrow 0$, $\mathcal{R}_t(\bar{\Theta})=\bar{\Theta} $

\While{$\mathcal{R}_t(\bar{\Theta})\neq \varnothing$}
\State  Compute $\mathcal{R}_{t+1}(\delta_{k^{M}}^{\omega}):=\{\delta_{k^{M}}^\rho|[\hat{L}]_{\rho\omega}>0, \delta_{k^{M}}^{\omega},\delta_{k^{M}}^{\rho}\in\mathcal{R}_t(\bar{\Theta}) \}$ and $\mathcal{R}_{t+1}(\bar{\Theta})=\bigcup\limits_{\delta_{k^{M}}^{\omega}\in\mathcal{R}_t(\bar{\Theta})}
\mathcal{R}_{t+1}(\delta_{k^M}^{\omega})$
\While {$\mathcal{R}_{t+1}(\bar{\Theta})\neq\emptyset\subseteq \mathcal{R}_{t}(\bar{\Theta})$}
\State $t+1\leftarrow t$
\EndWhile
\EndWhile
\State Let $I_c(\bar{\Theta})=\mathcal{R}_{t_0}(\bar{\Theta})=\{\delta_{k^M}^{\lambda_1},\delta_{k^M}^{\lambda_2},\cdots
\delta_{k^M}^{\lambda_q}\}$ and calculate $\Xi=\sum\limits_{i=1}^q\delta_{k^M}^{\lambda_i}$
\State  Compute $\hat{L}^{\alpha-q}$ and $\Xi^T\mathrm{Col}_{\omega}(\hat{L}^{\alpha-q}), \delta_{k^{M}}^\omega\in \hat{S}\setminus\mathcal{R}_{t_0}(\bar{\Theta})$
\If {$\Xi^T\mathrm{Col}_{\omega}(\hat{L}^{\alpha-q})>0, \forall \delta_{k^{M}}^\omega\in \hat{S}\setminus\mathcal{R}_{t_0}(\bar{\Theta})$} the WNEG with algebraic form \eqref{3.5} achieves strategy consensus.
\State \textbf{otherwise}, the WNEG cannot achieve strategy consensus.
                   \State  \textbf{break}
                   \EndIf
 \end{algorithmic}
\end{algorithm}

\begin{algorithm}
         \caption{Verify the strategy optimization of the WNEG with algebraic form \eqref{3.5}}\label{suanfa1}
         \begin{algorithmic}[1]
                    \Require $I_c(\Omega)$, $\bar{L}$; $T_{\overline{W}_0|\overline{W}_0}$.
                   \Ensure  $\hat{T}_1$.
 \State  $\mathbf{Initialization:}$ $t\leftarrow 0$, $\overline{W}_t \leftarrow I_{c}(\Omega)$
                   \While{$\overline{W}_t\neq \varnothing$}
\State  $W_{t+1}=\Delta_{k^M}\setminus\bigcup\limits_{i=0}^{t}\overline{W}_{i}$
                   \State  Compute the truth matrix $T_{\overline{W}_{t}|W_{t+1}}$ as
                   $$[T_{\overline{W}_{t}|W_{t+1}}]_{ij}=
\begin{cases}
1,& \mathrm{if}\;\bar{L}\delta_{k^{m}}^{i}\delta_{k^M}^{j}\in \overline{W}_{t},\;\forall\delta_{k^M}^{j}\in W_{t+1},\\
0,& \mathrm{otherwise}.
\end{cases}$$
                   \State Construct $\overline{W}_{t+1}=\{\delta_{k^M}^j|\mathrm{Col}_j(T_{\overline{W}_{t}|W_{t+1}})\neq \mathbf{0}_{k^m}\}$
                   \State $t\leftarrow t+1$
                   \EndWhile
                   \State  \ \ \ $t_2\leftarrow t$
                   \State  \ \ \ Compute $\widehat{T}_1=T_{\overline{W}_0|\overline{W}_0}+\sum_{\mu=1}^{t_2}T_{\overline{W}_{\mu-1}|\overline{W}_{\mu}}$
                   \State  \textbf{return} $\widehat{T}_1$
\If {$\mathrm{Col}_j(\hat{T}_1)\neq\mathbf{0}_{k^m}, \forall\;j\in[1:k^M]$,} the WNEG with algebraic form \eqref{3.5} achieves strategy optimization objective.
\State \textbf{otherwise}, the WNEG cannot achieve the strategy optimization objective.
                   \State  \textbf{break}
                   \EndIf
         \end{algorithmic}
\end{algorithm}

\begin{algorithm}
         \caption{Verify the controllability of the WNEG with algebraic form \eqref{3.5}}\label{7.5}
         \begin{algorithmic}[1]
                    \Require $\bar{L}$, $\delta_{k^M}^i$, $\delta_{k^M}^j$, $K$.
                   \Ensure  $\hat{L}^K, \sum\limits_{\mu=1}^{k^M}\hat{L}^{\mu}$.
 \State  Split $\bar{L}$ into $k^m$ equal blocks as $\bar{L}=[\bar{L}_1,\bar{L}_2,\cdots,\bar{L}_{k^m}]$
\State Compute $\hat{L}=\sum\limits_{\tau=1}^{k^m}\bar{L}_\tau$
and $\hat{L}^K$
\If{ $[\hat{L}^K]_{ij}>0$,} go to Step 6.
\State \textbf{otherwise}
                 \State  \textbf{break}
\State Compute $\sum\limits_{\mu=1}^{k^M}\hat{L}^\mu$
\If {$\sum\limits_{\mu=1}^{k^M}\mathrm{Col}_j(\hat{L}^\mu)>0$,} go to Step 10.
                   \State \textbf{otherwise}
                   \State  \textbf{break}
\If {$\sum\limits_{\mu=1}^{k^M}\hat{L}^\mu>0$,} the WNEG with algebraic form \eqref{3.5} is controllable.
 \State \textbf{otherwise} the WNEG is not controllable.
\State  \textbf{return} $\hat{L}^K, \sum\limits_{\mu=1}^{k^M}\hat{L}^\mu$
 \State  \textbf{break}
\EndIf
\EndIf
\EndIf
         \end{algorithmic}
\end{algorithm}
\subsection{Profile feedback control}
The profile feedback control for strategy consensus is designed as follows.
Define $\Psi_\tau(I_c(\bar{\Theta}))$ is the set of profiles that can reach $I_c(\bar{\Theta})$ at $\tau$-step and $\Psi_0(I_c(\bar{\Theta}))=I_c(\bar{\Theta})$.
We compute $\Psi_\tau(I_c(\bar{\Theta})):=\{\delta_{k^M}^\epsilon|\Xi^T \mathrm{Col}_{\epsilon}(\hat{L})>0, \delta_{k^M}^\epsilon\in\Delta_{k^M}\setminus\bigcup\limits_{i=0}^{\tau-1}\Psi_i(I_c(\bar{\Theta})) \}$. Then, the  profile feedback gain matrix is $\bar{G}=\delta_{k^{m}}[\beta_1,\beta_2,\cdots,\beta_{k^M}]$, where $\beta_j\in\{\eta|\mathrm{Col}_\epsilon(\bar{L}_\eta)\in\Psi_{\tau-1}(I_c(\bar{\Theta}))
, \delta_{k^{M}}^\epsilon\in\Psi_\tau(I_c(\bar{\Theta}))\}, j, \epsilon\in[1:k^M],\tau\in[1:t_1]$. Then we give the profile feedback control that enables the WNEG with algebraic form \eqref{10.3} to achieve strategy consensus. For any profile $y(t)=\ltimes_{j=1}^My_j(t)=\delta_{k^M}^j, j\in[1:M]$, we compute the profile $x(t)=\ltimes_{i=1}^nx_i(t)=\delta_{k^n}^{\pi_j}$, where  $x_i(t)=x_{j_\iota}(t)=\Lambda_M^jy(t), \forall i\in H_j, \iota\in[1:n_j], j\in[1:M]$ and
$\Lambda_M^j=\left\{
                                          \begin{array}{ll}
                                            I_{k}\otimes\mathbf{1}_{k^{M-1}},& j=1\\
\mathbf{1}_{k^{j-1}}\otimes I_{k}\otimes \mathbf{1}_{k^{M-j}}, &  j\in[2:M]\\
                                            \mathbf{1}_{k^{M-1}}\otimes I_k, & j=M
                                          \end{array}
                                        \right.
$.
The profile feedback gain matrix is $$\left\{
                                      \begin{array}{ll}
                                        \mathrm{Col}_{\pi_j}(G)\in\{\mathbf{1}_{k^{m'}}\ltimes_{\alpha=1}^{m'}u_\alpha, |u_{\alpha}=\hat{u}_q=\Lambda_m^q\beta_j, \\ \quad \quad \quad \; \forall i_\alpha\in H_{j_q}, \alpha\in[1:m'], q\in[1:m]\}, \\
                                        \mathrm{Col}_\pi(G)\in\{\mathbf{1}_{k^{m'}}\ltimes_{\alpha=1}^{m'}u_\alpha|u_a=u_b, \forall i_a, i_b\in H_{v_q},\\ \;\quad \quad \quad a,b\in[1:m'], q\in[1:m],u_\alpha\in\Delta_k\}, \pi\neq\pi_j.
                                      \end{array}
                                    \right.$$


The profile feedback control for strategy optimization is designed as follows.
Then, the  profile feedback gain matrix is $\bar{G}=\delta_{k^{m}}[\beta_1,\beta_2,\cdots,\beta_{k^M}]$, where
$$\left\{
        \begin{array}{ll}
          \beta_j\in\{i|[T_{\overline{W}_0|\overline{W}_0}]_{ij}=1\}, & \delta_{k^M}^j\in I_c(\Omega)\\
          \beta_j\in \{i|[T_{\overline{W}_{t-1}|W_{t}}]_{ij}=1\}, & \delta_{k^M}^j\in W_t, t\in[1:t_2].
        \end{array}
      \right.
$$
The methods for designing the profile feedback control of system \eqref{10.3} based on that of system \eqref{3.5} are the same as those in the previous section, and will not be repeated here.


\end{document}